\newcommand{\La}{{\mathtt L}}
\newcommand{\Ra}{{\mathtt R}}
\newcommand{\mm}[1]{{\mathbf{#1}}}
\newcommand{\ii}{ {\rm i} }
\newcommand{\dd}{ {\rm d} }
\newcommand{\ZZ}{\mathbb{Z}}
\newcommand{\RR}{\mathbb{R}}
\newcommand{\CC}{\mathbb{C}} 
\newcommand{\AAA}{{\mathfrak A}}
\def\one{\mathbbm{1}}
\newcommand{\ave}[1]{{\left\langle #1\right\rangle}}
\newcommand{\tr}{\text{tr}}
\def\bra#1{\mathinner{\langle{#1}|}}
\def\ket#1{\mathinner{|{#1}\rangle}}
\def\x{{\rm x}}
\def\y{{\rm y}}
\def\z{{\rm z}}
\journalname{Communications in Mathematical Physics}
\begin{document}

\title{Thermodyamic bounds on Drude weights in terms of almost-conserved quantities}
\titlerunning{Thermodyamic bounds on Drude weights in terms of almost-conserved quantities}

\author{Enej Ilievski and Toma\v z Prosen}
\institute{Department of Physics, Faculty of Mathematics and Physics, University of Ljubljana\\
Jadranska 19, SI-1000 Ljubljana, Slovenia\\\\
\email{enej.ilievski@fmf.uni-lj.si, tomaz.prosen@fmf.uni-lj.si}
}
\authorrunning{E. Ilievski and T. Prosen}

\date{\today}

\maketitle

\begin{abstract}
We consider one-dimensional translationally invariant quantum spin (or fermionic) lattices and prove a Mazur-type inequality bounding the time-averaged thermodynamic limit of a finite-temperature expectation of a spatio-temporal autocorrelation function of a local observable in terms of quasi-local conservation laws with open boundary conditions. Namely, the commutator between the Hamiltonian and the conservation law of a finite chain may result in boundary terms only.
No reference to techniques used in Suzuki's proof of Mazur bound is made (which strictly applies only to finite-size systems with exact conservation laws), but Lieb-Robinson bounds and exponential clustering theorems of quasi-local $C^*$ quantum spin algebras are invoked instead.
Our result has an important application in the transport theory of quantum spin chains, in particular it provides rigorous non-trivial examples of
positive finite-temperature spin Drude weight in the anisotropic Heisenberg $XXZ$ spin 1/2 chain [Phys. Rev. Lett. {\bf 106}, 217206 (2011)].
\end{abstract}

\section{Introduction}

\subsection{The problem}

In 1969, Peter Mazur suggested \cite{Mazur} that {\em the time-average}
$\bar{A}=\lim_{t\to\infty}t^{-1} \int_0^t \dd t' A(t')$ of a bounded observable $A(t)$ can be bounded from below
by means of {\em exact conservation laws}, i.e. observables $Q_k$, $k=1,2\ldots$, 
satisfying $(\dd/\dd t)Q_k=0$ and being mutually in {\em involution}, namely
\begin{equation}
\ave{\bar{A}^2}_\beta = \lim_{t\to\infty}\frac{1}{t}\int_{0}^t \dd t' \ave{A(0)A(t')}_\beta \ge \sum_k \frac{\ave{A Q_k }^2_\beta}{\ave{ Q_k^2}_\beta}.
\label{eq:MazursIneq}
\end{equation}
$\ave{\bullet}_\beta$ is a thermal average at inverse temperature $\beta$, and observables $Q_k$ have to be chosen to be mutually
`orthogonal' $\ave{Q_k Q_l}_\beta = \delta_{k,l} \ave{Q_k^2}_\beta$.
Considering observables with a vanishing equilibrium expectation $\ave{A}_{\beta} = 0$, strict positivity of the right-hand side (RHS) of Mazur inequality (\ref{eq:MazursIneq}) is a convenient indicator of {\em non-ergodicity} of the observable $A$.
Mazur has shown that in classical statistical physics, the inequality (\ref{eq:MazursIneq}) is merely a corollary of the 
Khinchin theorem \cite{Khinchin}. Later, Suzuki \cite{Suzuki} has proven a quantum version of the bound (\ref{eq:MazursIneq}),
strictly applying only to {\em finite} quantum systems as his proof is based on explicit diagonalization of the Hamiltonian operator.
Existence of non-trivial constants of motion which is characteristic of {\em completely integrable systems} implies non-ergodicity of (almost all/generic) observables, making an intimate connection between non-ergodicity and complete integrability, both in classical and quantum statistical mechanics.

The inequality (\ref{eq:MazursIneq}) has found numerous and very useful applications in condensed matter physics as it is naturally 
suited for bounding dynamical susceptibilities within the linear response theory. For example, within the framework of Kubo's linear response approach, the zero-frequency Drude peak (see e.g.\cite{Meisner}) is defined in terms of the real part of (heat/electric/spin) conductivity $\sigma'_\beta(\omega)= 2\pi D_\beta \delta(\omega) + \sigma^{\rm reg}_\beta(\omega)$. The constant $D_\beta$ known as the {\em Drude weight} can be expressed\footnote{For a detailed discussion and derivation of the linear response expression of Drude weight see subsection \ref{sect:lr}.} for a one-dimensional quantum lattice of size $n$ as
\begin{equation}
D_\beta = \lim_{t\to\infty}\lim_{n\to\infty} \frac{\beta}{4 n t} \int_{-t}^t \dd t' \ave{J_n(0) J_n(t')}_\beta .
\label{eq:Kubo}
\end{equation}
The symmetrized correlation function is used in order to render Drude weight manifestly real.
Here, in (\ref{eq:Kubo}), $J_n=\sum_{x=1}^n j_x$ is an extensive (energy/particle/spin) current operator, and $j_x$ is a current density at site $x$.
Positivity of the Drude weight $D_\beta > 0$ is a signature of a ballistic transport at finite temperature and is generically related
to complete integrability via Mazur inequality (\ref{eq:MazursIneq}), as pointed out in Ref.~\cite{Zotos}.

It should be emphasized that the order of limits in the Kubo-type formula (\ref{eq:Kubo}) is crucial, namely taking them in a different order can sometimes produce completely different results (see e.g. Ref.~\cite{Rigol}).
Within Suzuki's approach, {\em thermodynamic} limit (TL) of letting the system size $n$, or the number of degrees of freedom, $n\to\infty$,
can only be taken at the end, which is not according to a fundamental principle of statistical mechanics which states that
TL $n\to\infty$ has to be always considered before the {\em long-time} limit $t \to \infty$ (say as in formula 
(\ref{eq:Kubo})).
Our program in this paper is then to use the natural language of quantum statistical mechanics of systems with local interactions --- the quasi-local $C^*$-algebras ---
and to develop an algebraic approach to Drude weight bounds of the Mazur-type pertaining to strictly infinite quantum lattice systems. In the Suzuki's theorem conservation laws have to strictly commute with the Hamiltonian, $[H,Q_k]=0$, for all finite sizes $n$. For that reason, one has to often study periodic boundary conditions which may only guarantee existence of such 
exact conserved quantities. However, in our $C^*$-algebraic setup we can accommodate also for quite common situations, where for any finite system size $n$ the conservation laws $Q_k$ might not be perfect, but the commutator $\dd Q_k/\dd t = \ii [H,Q_k]$ may result in terms supported at the system's boundaries. In this way, TL $n\to \infty$ can be taken in the beginning, by inclusion of larger and larger open lattices, and no resort to periodic boundary conditions is needed.

\subsection{Summary of the main results}

The main idea behind our susceptibility bounds is to bring together two classical results in quantum spin lattice systems with local interactions \cite{BR,Bruno,LiebRobinson,Araki}, namely (i) the exponential clustering property of finite temperature Gibbs states in one-dimensional translationally invariant lattices, and (ii)  Lieb-Robinson kinematic bound on the spatio-temporal propagation of local quantum correlations which result from boundary terms of the commutators. The projection form of the Lieb-Robinson bound \cite{BHV} enables us to nicely separate the causality light-cone of a time-evolved observable in the autocorrelator and to exploit the property of exponential clustering in the equilibrium state.

After introducing the notation and outlining the main concepts ((i) and (ii)) in section \ref{sect:prel} we write and prove in section \ref{sect:main} our main result (Theorem 1): Namely, given any observable $J_n$ and Hamiltonian $H_n$ of a finite lattice system of size $n$, such that the interactions are translationally invariant, and another quasi-local extensive observable $Q_n$ which has a property that the commutator $[H_n,Q_n] = b_1 - b_n$ where $b_1$ and $b_n$ are two operators supported at the left and the right boundary of the chain, we show that the Drude weight (\ref{eq:Kubo}) is strictly bounded by a simple expression 
$D_\beta \ge \frac{\beta}{2} \lim_{n\to \infty} n^{-1} (\frac{1}{2}\ave{J_n Q_n + Q_n J_n}_\beta)^2/\ave{Q_n^2}_\beta$.
In section \ref{sect:general} we then provide trivial generalization of the result (Theorem 2) to the case where we have an arbitrary set of almost-conserved quasi-local operators. In section \ref{sect:examples} we describe a nontrivial application of our results for bounding the spin Drude weight in the anisotropic Heisenberg $XXZ$ model. In section \ref{sect:discussion} we conclude by discussing the assumptions needed to equate the thermal-averaged correlator (\ref{eq:Kubo}) with the canonical Kubo-Mori expression, and state some remarks on possible other general contexts where results of our type may appear.

\section{Preliminaries}
\label{sect:prel}

We consider the following setup, where the notation essentially follows Bratteli and Robinson \cite{BR}. Let $N \in \ZZ_+$ be a local Hilbert space dimension, say $N=2$ for spins 1/2 or qubits,
and $\AAA_x \equiv \CC^{N \times N}$, $x \in \ZZ$, a local on-site matrix algebra.
We associate a matrix algebra to any finite open lattice of integers, the so-called {\em chain} $[x,y] = \{x,x+1,\ldots,y-1,y\}$, as
$\AAA_{[x,y]} = \bigotimes_{z=x}^y \AAA_z$ and define the quasi-local (UHF) $C^*$-algebra $\AAA=\AAA_\ZZ$ in terms of a closure of the {\em limit by inclusion} $[x,y] \to \ZZ$.

Let the {\em interaction} $h\in \AAA_{[0,d_{h}-1]}$ be an element of a local spin $C^{*}$-algebra on $d_h$ sites \footnote{We shall assume $d_h \ge 2$, as the on-site interaction with $d_h=1$ represent a trivial case with strictly local dynamics.}. The Hamiltonian
\begin{equation}
H_{\Lambda_{n}}=\sum_{x=1}^{n-d_{h}+1}h_{x}
\end{equation}
is an operator acting on a finite chain $\Lambda_{n} \equiv [1,n]$, which is a sum of local energy densities 
$h_{x}=\eta_{x}(h) \in \AAA_{[x,x+d_h-1]}$, obtained
by a group of lattice ({\em shift}) $*$-automorphisms $\eta_x$ of $\AAA$, defined by $\eta_{y}(a_{x})=a_{x+y}$.
Translationally invariant Hamiltonian can be understood in terms of the limit by inclusion $\Lambda\to\ZZ$ of Hamiltonians $H_\Lambda$ for arbitrary chains $\Lambda$
\begin{equation}
H_{\Lambda}=\sum_{x=\text{min}\,\Lambda}^{\text{max}\,\Lambda-d_{h}+1}h_{x}.
\end{equation}
The latter defines another group of $*$-automorphisms of the quasi-local algebra $\AAA$, namely the {\em time automorphism}
\begin{equation}
\tau_{t}(a)=\lim_{\Lambda \to \mathbb{Z}}\tau^{\Lambda}_{t}(a),\qquad \tau^{\Lambda}_{t}(a) := e^{\ii H_{\Lambda}t}ae^{-\ii H_{\Lambda}t},
\end{equation}
and a finite temperature equilibrium expectation, namely the infinite volume {\em Gibbs state} 
\begin{equation}
\omega_\beta(a) = \lim_{\Lambda\to\ZZ} \frac{\tr (a e^{-\beta H_{\Lambda}})}{\tr(e^{-\beta H_{\Lambda}})}
\end{equation}
strictly defined only for local operators $a$ and extended to $\AAA$ by continuity. Araki (Theorem 2.3 of \cite{Araki}) has shown that
such Gibbs state is an extremal $(\tau,\beta)$-KMS state,
which is invariant under space and time translations
\begin{eqnarray}
\omega_\beta(\eta_x(A)) &=& \omega_\beta(A), \label{eq:translx}\\
\omega_\beta(\tau_t(A)) &=& \omega_\beta(A),  \label{eq:translt}
\end{eqnarray}
for any $A\in\AAA$, $x\in\ZZ$, $t\in\RR$. Most importantly, Gibbs state $\omega_\beta$ has an {\em exponential clustering property} (ECP) 
(Theorem 2.3 of \cite{Araki}, see also Theorem 3 of \cite{Matsui}): for any pair of local operators $f \in \AAA_{[-d_f,-1]}$,  $g \in \AAA_{[0,d_g-1]}$,  $d_f,d_g\in \ZZ_+$, and a displacement $x\in \ZZ_+$
one has
\begin{equation}
\left|\omega_\beta(f \eta_{x}(g)) - \omega_\beta(f)\omega_\beta(g)\right| \le
\kappa \|f\| \|g\| e^{-\rho x}
\label{eq:ECP}
\end{equation}
where $\kappa,\rho$ are two positive constants, which do not depend on $x$, neither on $f,g$.

Next, we define an extensive current operator in the open chain
\begin{equation}
J_{\Lambda_{n}}=\sum_{x=1}^{n-d_{j}+1}j_{x},\quad j_{x}=\eta_{x}(j),
\label{eq:current}
\end{equation}
where the current density $j$ belongs to a $d_j$-site local algebra  $j\in \AAA_{[0,d_{j}-1]}$.
In fact, the observable $J$ may not necessarily be interpreted as a physical current, but it can be any spatial sum of a local self-adjoint operator 
$j$ (representing an extensive translationally invariant observable), the only condition being that its local equilibrium expectation vanishes
\begin{equation}
\omega_\beta(j) = 0.
\label{eq:jzero}
\end{equation} 

We shall think of it as a current merely because the most important application we have in mind is in the quantum transport.

\begin{definition}
\label{def:1}
The key concept in our work is a {\em quasi-local translationally invariant conservation law} $Q$ with the following properties:
\begin{enumerate}
\item $Q$ is a translationally invariant spatial sum of exponentially localized (quasi-local) operators.
For any finite chain $\Lambda_n$:
\begin{equation}
Q_{\Lambda_{n}}=\sum_{d=1}^{n}Q^{(d)}_{\Lambda_n},\quad Q_{\Lambda_{n}}^{(d)}=\sum_{x=1}^{n-d+1}q_{x}^{(d)},\quad \|q^{(d)}\|\leq \gamma \exp{(-\xi d)},
\label{eq:qdef}
\end{equation}
where $q^{(d)} = (q^{(d)})^* \in \AAA_{[0,d-1]}$, and $\gamma,\xi$ are positive $n$-independent constants.
\item
$Q$ should have vanishing thermal expectation value
\begin{equation}
\omega_\beta(Q_{\Lambda_n}) = 0.
\end{equation}
We can thus assume also that all orders of local density $q^{(d)}$ satisfy
\begin{equation}
\omega_\beta(q^{(d)})=0.
\end{equation}
\item
The operator $Q_{\Lambda_{n}}$ is \textit{almost-conserved} on any open chain $\Lambda_{n}$, i.e. it commutes with the Hamiltonian $H_{\Lambda_{n}}$ except for terms that are supported at the boundary of the chain
\begin{equation}
[H_{\Lambda_{n}},Q_{\Lambda_{n}}]=B_{\partial_{n}}
\label{eqn:commutator}
\end{equation}
where $\partial_n \equiv [1,d_b] \cup [n-d_b+1,n]$ and
\begin{equation}
B_{\partial_n} := b_{1}-b_{n-d_{b}+1}
\end{equation} 
for some local operator $b \in \AAA_{[0,d_b-1]}$.
\end{enumerate}
\end{definition}
For concrete, nontrivial examples of $Q$, see section \ref{sect:examples}.

Take now any chain $\Lambda$ which is sufficiently bigger than $\Lambda_n$, say $\Lambda \supseteq [-d_h + 2,n+d_h-1]$. Then, the almost-commutation identity (\ref{eqn:commutator})
implies
\begin{equation}
[H_\Lambda,Q_{\Lambda_n}] = B_{\partial_n} + [h_{\rm L},Q_{\Lambda_n}] + [h_{\rm R},Q_{\Lambda_n}]
\label{eq:comm2}
\end{equation}
where
\begin{equation}
h_{\rm L}:=\sum_{x=-d_{h}+2}^{0}h_{x},\qquad h_{\rm R}:=\sum_{x=n-d_{h}+2}^{n}h_{x}
\end{equation}
represent the left and the right near-boundary interactions.
The RHS of (\ref{eq:comm2}) can be rewritten as a sum of two quasi-local operators localized near the boundary
\begin{eqnarray}
[H_\Lambda,Q_{\Lambda_n}] &=& B_{\rm L} + B_{\rm R}, \label{eq:eom} \\
B_{\rm L} &:=& \sum_{d=0}^n b^{(d)}_{\rm L}, \nonumber\\ 
B_{\rm R} &:=& \sum_{d=0}^n b^{(d)}_{\rm R}, \\
b^{(0)}_{\rm L} &:=& b_1,\,\,\qquad\qquad  b^{(d)}_{\rm L} := \sum_{x=-d_h+2}^0\sum_{y=1}^{x+d_h-1} [h_x,q^{(d)}_y], \nonumber\\
b^{(0)}_{\rm R} &:=& -b_{n-d_b+1},\quad b^{(d)}_{\rm R} := \sum_{x=n-d_h+2}^n \sum_{y=n-d-d_h+3}^{x-d+1} [h_x,q^{(d)}_y]. 
\end{eqnarray}
Note that the supports\footnote{
Due to existence of the \textit{principle of locality} in the systems with finite-range interactions, the notion of an operator support enters naturally into
the discussion. The support ${\rm supp}\,A$ of an observale $A\in \AAA_\Lambda$ is the minimal set $\Gamma \subset \Lambda$ for which
$A=\tilde{A}\otimes \one_{\Lambda \setminus \Gamma}$ for some $\tilde{A}\in \AAA_\Gamma$.
} of the boundary operators are
\begin{eqnarray}
{\rm supp\,} b^{(d)}_{\rm L} &\subseteq& [-d_h+2,\max\{d_h-2+d,d_b\}], \nonumber \\ 
{\rm supp\,} b^{(d)}_{\rm R} &\subseteq& [\min\{n-d-d_h+3,n-d_b+1\},n+d_h-1],
\end{eqnarray}
and that they are {\em exponentially localized} (following from the definition (\ref{eq:qdef}) and elementary inequalities, $\| A B\| \le \| A\| \| B \|$ and the triangular inequality):
\begin{equation}
\| b^{(d)}_{\rm L} \| = \| b^{(d)}_{\rm R}\| \le \max\{ d_h(d_h-1) \gamma \| h\| ,\| b\|\}e^{-\xi d}.
\end{equation}
Eq. (\ref{eq:eom}) results in the Heisenberg equation of motion for the almost-conserved operator  
\begin{equation}
(\dd/\dd t)\tau^\Lambda_t(Q_{\Lambda_n}) = 
\ii [H_\Lambda, \tau^\Lambda_t(Q_{\Lambda_n})] = \ii \tau^\Lambda_t(B_{\rm L}+B_{\rm R}),
\label{eq:heom}
\end{equation} 
which together with the  initial condition $\tau_{t=0}(Q_{\Lambda_n}) = Q_{\Lambda_n}$, after taking the limit $\Lambda\to\ZZ$, integrates to an explicit time dependence
\begin{equation}
\tau_{t}(Q_{\Lambda_{n}})=Q_{\Lambda_{n}} + \ii \int_{0}^{t}\dd s\, \tau_{s}(B_{\rm L}+B_{\rm R}).
\label{eq:explicit}
\end{equation}

Another crucial technical tool that we shall facilitate is the {\em Lieb-Robinson estimate} (LRE) \cite{LiebRobinson} which bounds the speed at which a disturbance propagates through a quantum spin system with local interactions. Let $f \in \AAA_{X}$, 
$g \in \AAA_{\Gamma}$,  where $X,\Gamma\subset \ZZ$ are two subsets with $|X|$, $|\Gamma|$ sites, such that at least one of them is
finite. A useful form of LRE (see e.g. \cite{BHV}) then states
\begin{equation}
\|[\tau_{t}(f),g]\|\leq 
\phi \min\{|X|,|\Gamma|\} \| f\| \| g \| \exp{\left(-\mu({\rm dist}(X,\Gamma) - v |t|)\right)},
\label{eq:LRE}
\end{equation}
where ${\rm dist}(X,\Gamma)=\min_{x\in X, y\in\Gamma}|x-y|$ is the distance between the sets $X,\Gamma$,
and $\phi$, $\mu$ and $v$ are some positive constants, independent of $f,g$, and $t$. 

Regarding any finite subset $\Gamma \subset \ZZ$ we define a projection mapping $(\bullet)_\Gamma:\AAA \to \AAA$ as
\begin{equation}
(A)_\Gamma := \lim_{\Lambda\to\ZZ} \frac{ \tr_{\Lambda\setminus \Gamma}(A)}{\tr (\one_{\Lambda\setminus \Gamma})} \otimes \one_{\Lambda\setminus\Gamma}
= \lim_{\Lambda\to\ZZ} \int \dd \mu(U_{\Lambda\setminus\Gamma}) 
U_{\Lambda\setminus\Gamma} A U_{\Lambda\setminus\Gamma}^*
\label{eq:proj}
\end{equation}
where $\tr_X$ denotes a partial trace with respect to a local algebra supported on $X$. 
The RHS of (\ref{eq:proj}) provides a very useful identity, where $\dd\mu(U_X)$ denotes the normalized Haar measure for the integration over the full unitary group of a $N^{|X|}$ dimensional Hilbert space over sites $X$.

Bravyi, Hastings and Verstraete (BHV) \cite{BHV} have reformulated LRE (\ref{eq:LRE}) in a very convenient way, namely they have 
shown that (\ref{eq:LRE}) implies the estimate
\begin{equation}
\| \tau_t(f) - (\tau_t(f))_{\Gamma}\| \le \phi |X| \| f\| \exp{\left(-\mu ({\rm dist}(X,\ZZ\setminus\Gamma) - v |t|)\right)}
\label{eq:BHV}
\end{equation}
where $f \in \AAA_X$, and $\Gamma\subset\ZZ$ an arbitrary set of sites. The constants $\phi,\mu,v$ are the same as in (\ref{eq:LRE}) and hence also do not depend on $\Gamma$. The BHV inequality (\ref{eq:BHV}) is proven by applying the identity (\ref{eq:proj}), writing
\begin{eqnarray}
\| \tau_t(f) - (\tau_t(f))_{\Gamma} \| &=& \| \lim_{\Lambda\to\ZZ}\int \dd \mu(U_{\Lambda\setminus\Gamma}) 
[\tau_t(f),U_{\Lambda\setminus\Gamma}] U^*_{\Lambda\setminus\Gamma} \|
\nonumber \\&\le& \lim_{\Lambda\to\ZZ}\int \dd \mu(U_{\Lambda\setminus\Gamma}) 
\| [U_{\Lambda\setminus\Gamma},\tau_t(f)]\|,
\end{eqnarray} and then using LRE (\ref{eq:LRE}). Note that $U^*_{\Lambda\setminus\Gamma} = U^{-1}_{\Lambda\setminus\Gamma}$.

\section{Thermodynamic limit of Mazur inequality}
\label{sect:main}

Let us consider a suitable extensive translationally invariant observable $J$ (\ref{eq:current}).
Then we define the Drude weight in terms of a time-averaged autocorrelation function as the following double limit,
\begin{equation}
D_\beta:=\lim_{t\to \infty}\lim_{n\to \infty}\frac{\beta}{2n}\frac{1}{2t}\int_{-t}^{t} \dd t' 
\omega_{\beta}(J_{\Lambda_{n}}\tau_{t'}(J_{\Lambda_{n}})).
\label{eq:Dbeta}
\end{equation}
In order to avoid any ambiguity and to make the definition precise, we stress that the Gibbs state $\omega_\beta$ and the time-evolution $\tau_t$ on the RHS of (\ref{eq:Dbeta}) are already taken for an
infinite system, before the infinite volume limit $n\to\infty$ is applied to the extensive observable $J_{\Lambda_n}$. However, we will show later that RHS can be expressed in terms of local quantities only, therefore a single TL suffices.

The Drude weight $D_\beta$ is an important quantity in linear response theory of condensed matter physics (see the discussion in
subsection \ref{sect:lr}). In the following we prove a general and useful inequality related to it,
and clarify its existence in infinite one-dimensional systems. Moreover, $D_\beta$ can be considered as an interesting {\em ergodicity indicator} of $C^*$ dynamical systems.
 
\begin{theorem}
\label{theo:main}
(i) The following limit exists for any $t\in \RR$
\begin{equation}
c(t) = \lim_{n\to\infty} \sum_{x=-n}^n \omega_\beta( j_x \tau_t(j) ).
\label{eq:cdef1}
\end{equation}
(ii) Let us assume that the symmetric time-average of $c(t)$ exists
\begin{equation}
\bar{c} = \lim_{t\to \infty} \frac{1}{2t}\int_{-t}^t \dd s\, c(s),
\label{eq:cbar}
\end{equation}
together with a very mild condition, namely that the integral of $|t|(c(t) - \bar{c})$ grows slower than $t^2$
\begin{equation}
\lim_{t\to\infty}\frac{1}{t^2}\int_{-t}^t \dd s |s| (c(s) - \bar{c}) = 0.
\label{eq:ccbar}
\end{equation}
Then, the double-limit $D_\beta$ (\ref{eq:Dbeta}) exists and is equal to 
\begin{equation}
D_\beta = \frac{\beta}{2} \bar{c}.
\label{eq:Dc}
\end{equation}
(iii) Let $Q_{\Lambda_{n}}$ be a self-adjoint almost-con\-served quantity, satisfying {\em Definition~\ref{def:1}} (Eq. \ref{eqn:commutator}). 
Then, under the assumptions of (ii), a lower bound on $D_\beta$ exists  and is equal to the limit
\begin{equation}
D_\beta \geq \frac{\beta}{2}\lim_{n\to \infty}\frac{1}{n}
\frac{(\omega_{\beta}(\{J_{\Lambda_{n}},Q_{\Lambda_{n}}\}))^{2}}{4\omega_{\beta}(Q^{2}_{\Lambda_{n}})}.
\label{eqn:theorem}
\end{equation}
\end{theorem}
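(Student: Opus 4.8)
The plan is to prove the three parts in order, using part (i) as the technical engine and part (iii) as an application of a finite-dimensional Cauchy--Schwarz (Mazur) argument combined with the TL control furnished by (i) and (ii).

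For part (i), the key observation is that $\omega_\beta(j_x \tau_t(j))$ is, up to a translationally invariant shift, controlled for large $|x|$ by both the Lieb--Robinson estimate and the exponential clustering property. First I would split $\tau_t(j)$ using the BHV inequality \eqref{eq:BHV}: approximate $\tau_t(j)$ by its projection $(\tau_t(j))_\Gamma$ onto a finite block $\Gamma$ centered at the origin of radius $R \sim v|t| + \text{const}$. The error is exponentially small in $R - v|t|$ uniformly in $n$. Once $\tau_t(j)$ is (up to this controllable error) supported on a fixed finite block, the summand $\omega_\beta(j_x (\tau_t(j))_\Gamma) - \omega_\beta(j_x)\omega_\beta((\tau_t(j))_\Gamma)$ decays exponentially in $|x|$ by the ECP \eqref{eq:ECP}, and $\omega_\beta(j_x) = \omega_\beta(j) = 0$ by \eqref{eq:jzero} together with translation invariance \eqref{eq:translx}. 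Hence the tail of the sum $\sum_x$ is absolutely summable with bounds independent of $n$, so the limit $n\to\infty$ in \eqref{eq:cdef1} exists by dominated convergence on the (discrete) index $x$; the finitely many central terms converge because $\tau_t$ and $\omega_\beta$ are already defined on the infinite algebra. This also shows $c(t)$ is bounded uniformly on compact $t$-intervals, which feeds into (ii).

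For part (ii), I would start from the definition \eqref{eq:Dbeta} and write $\omega_\beta(J_{\Lambda_n}\tau_{t'}(J_{\Lambda_n})) = \sum_{x,y=1}^{n-d_j+1}\omega_\beta(j_x \tau_{t'}(j_y))$. Using time-translation invariance \eqref{eq:translt} and space-translation invariance \eqref{eq:translx} this equals $\sum_{x,y}\omega_\beta(j_{x-y}\tau_{t'}(j))$; reindexing, $\frac{1}{n}\omega_\beta(J_{\Lambda_n}\tau_{t'}(J_{\Lambda_n})) = \sum_{|k| < n} (1 - |k|/n + O(1/n))\,\omega_\beta(j_k \tau_{t'}(j))$, which is a Cesàro-type average. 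By part (i) and the uniform exponential tail bound, $\lim_{n\to\infty}\frac1n \omega_\beta(J_{\Lambda_n}\tau_{t'}(J_{\Lambda_n})) = c(t')$, with the convergence uniform for $t'$ in compact sets (the $t'$-dependence enters only through the LR radius $R \sim v|t'|$, which is bounded on compacts). Therefore the inner limit in \eqref{eq:Dbeta} produces $\frac{\beta}{2}\cdot\frac{1}{2t}\int_{-t}^t c(t')\,\dd t'$, and letting $t\to\infty$ and invoking \eqref{eq:cbar} yields $D_\beta = \frac{\beta}{2}\bar c$. The condition \eqref{eq:ccbar} is exactly what is needed to justify interchanging the $t'$-integral with the $n\to\infty$ limit when the convergence is only locally uniform — it controls the contribution of the growing-with-$t'$ LR light-cone to the time average. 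I would make this quantitative: the $n$-error in approximating $\frac1n\omega_\beta(JJ)$ by $c(t')$ is of order $e^{-\mu(R - v|t'|)}$ with $R$ free, plus an $O(1/n)$ Cesàro defect times $\sum_k |k|\,|\omega_\beta(j_k\tau_{t'}(j))|$; optimizing $R$ and using \eqref{eq:ccbar} to bound the time-integral of the latter closes the estimate.

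For part (iii), this is the Mazur/Cauchy--Schwarz step, now legitimate because everything is finite-dimensional at fixed $n$ before any limit. At fixed $n$, consider the real symmetric bilinear form $(A,B) \mapsto \frac12\int_{-t}^t \dd t'\,\omega_\beta(\{A,\tau_{t'}(B)\})/(2t)$ on self-adjoint operators (positive semidefinite by spectral/KMS positivity of the symmetrized correlator — equivalently it is a Gram form built from $|\text{diag}|^2$ matrix elements in the eigenbasis, as in Suzuki). Apply Cauchy--Schwarz with the vectors $J_{\Lambda_n}$ and $Q_{\Lambda_n}$:
\begin{equation}
\frac{1}{2t}\int_{-t}^t \dd t'\,\omega_\beta(\{J_{\Lambda_n},\tau_{t'}(J_{\Lambda_n})\}) \ \ge\ \frac{\big(\frac{1}{2t}\int_{-t}^t \dd t'\,\omega_\beta(\{J_{\Lambda_n},\tau_{t'}(Q_{\Lambda_n})\})\big)^2}{\frac{1}{2t}\int_{-t}^t \dd t'\,\omega_\beta(\{Q_{\Lambda_n},\tau_{t'}(Q_{\Lambda_n})\})}.
\end{equation}
Now I would use the almost-conservation law. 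From \eqref{eq:explicit}, $\tau_{t'}(Q_{\Lambda_n}) = Q_{\Lambda_n} + \ii\int_0^{t'}\dd s\,\tau_s(B_{\rm L}+B_{\rm R})$, where $B_{\rm L}, B_{\rm R}$ are exponentially localized near the two boundaries with norms bounded uniformly in $n$. In the numerator, $\omega_\beta(\{J_{\Lambda_n},\tau_{t'}(Q_{\Lambda_n})\}) = \omega_\beta(\{J_{\Lambda_n},Q_{\Lambda_n}\}) + \ii\int_0^{t'}\omega_\beta(\{J_{\Lambda_n},\tau_s(B_{\rm L}+B_{\rm R})\})\,\dd s$; the correction term is $O(t')$ times a boundary-localized quantity paired with the extensive $J_{\Lambda_n}$ — but by ECP only the $O(1)$ sites of $J_{\Lambda_n}$ near each boundary contribute non-negligibly (the rest is exponentially suppressed in distance), so the correction is $O(t')$ \emph{uniformly in $n$}, hence contributes $O(1/n) \to 0$ after the $\frac1n$ prefactor and the $t\to\infty$ limit (this is where one also needs the time-average to tame the linear-in-$t'$ growth — the same flavor as \eqref{eq:ccbar}). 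Similarly in the denominator, $\frac{1}{2nt}\int_{-t}^t\omega_\beta(\{Q_{\Lambda_n},\tau_{t'}(Q_{\Lambda_n})\}) \to \frac1n\omega_\beta(Q_{\Lambda_n}^2) + (\text{boundary corrections} \to 0)$ after dividing by $n$. Putting it together: the LHS, after $\frac{\beta}{2n}$ and the double limit, is $D_\beta$ by parts (i)--(ii) applied to $J$; the RHS tends to $\frac{\beta}{2}\lim_{n\to\infty}\frac1n (\omega_\beta(\{J_{\Lambda_n},Q_{\Lambda_n}\}))^2/(4\omega_\beta(Q_{\Lambda_n}^2))$, giving \eqref{eqn:theorem}.

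\textbf{Main obstacle.} The delicate point is uniformity in $n$ when handling the boundary corrections coming from \eqref{eq:explicit}: one must show that pairing the \emph{extensive} observable $J_{\Lambda_n}$ (or $Q_{\Lambda_n}$) against the boundary operators $\tau_s(B_{\rm L}+B_{\rm R})$ produces something that stays $O(t')$ (not $O(n t')$) as $n\to\infty$. This requires combining the Lieb--Robinson light-cone for $\tau_s$ with the exponential clustering of $\omega_\beta$ to argue that only a bounded (i.e. $\sim v|s|$-sized, independent of $n$) window of $J_{\Lambda_n}$ near each boundary matters. The second subtlety is the legitimacy of interchanging the $t'$-integral with $n\to\infty$: the convergence $\frac1n\omega_\beta(JJ)\to c(t')$ is only locally uniform in $t'$ with an error governed by the growing LR radius, so one genuinely needs hypotheses \eqref{eq:ccbar} (and its analogue for $Q$, or a uniform bound) to push the time-average through. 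Everything else — the Cauchy--Schwarz, positivity of the symmetrized form, summability of ECP tails — is routine.
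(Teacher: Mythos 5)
Your parts (i) and (ii) are essentially the paper's argument (BHV projection onto a light-cone block plus exponential clustering to get an $n$-uniform, absolutely summable tail, then Ces\`aro/uniform convergence), though note that \eqref{eq:ccbar} is not what justifies the interchange in (ii) --- for fixed $t$ the convergence $c_n\to c$ is uniform on $[-t,t]$ and \eqref{eq:cbar} alone suffices there; \eqref{eq:ccbar} is needed elsewhere, as explained below.

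The genuine gap is in part (iii): the bilinear form $(A,B)\mapsto \frac{1}{4t}\int_{-t}^t \dd t'\,\omega_\beta(\{A,\tau_{t'}(B)\})$ is \emph{not} positive semidefinite at finite $t$, so the Cauchy--Schwarz inequality you write down is unjustified and in fact false in general. In an energy eigenbasis the diagonal of this form is $\sum_{m,n}p_m|A_{mn}|^2\,\mathrm{sinc}\bigl(t(E_n-E_m)\bigr)$, and the Dirichlet kernel $\mathrm{sinc}$ takes negative values; a two-level system with purely off-diagonal $A$ gives a negative ``norm'' for suitable $t$. Positivity is only restored in the limit $t\to\infty$, where the off-diagonal terms average out and one is left with Suzuki's Gram form --- but that is precisely the step the paper is built to avoid, since it forces $t\to\infty$ before $n\to\infty$, i.e.\ the wrong order of limits. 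The correct construction, and the actual key idea of the paper's proof, is to apply positivity of the state to the \emph{time-averaged operator itself}: with $A_{\Lambda_n,t}=\frac{1}{\sqrt{n}\,t}\int_0^t \dd t'(\tau_{t'}(J_{\Lambda_n})-\alpha Q_{\Lambda_n})$ one has $\omega_\beta(A_{\Lambda_n,t}^2)\ge 0$ manifestly, for every finite $n$ and $t$. Expanding the square produces the Fej\'er-weighted correlator $a_n(t)=\frac{1}{t}\int_{-t}^t\dd s\,(1-|s|/t)\,c_n(s)$ in place of your plain time average (the Fej\'er kernel, unlike the Dirichlet kernel, is positive definite), and the mixed and $Q^2$ terms with $Q$ left un-evolved. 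One then takes $n\to\infty$ of the resulting quadratic inequality in $\alpha$ term by term, using exactly the boundary/light-cone estimates you sketch to show the mixed term becomes $t$-independent (the correction is actually $O(|t|+t^2)$, not $O(t)$, because the light cone sweeps $\sim v|s|$ sites of $J_{\Lambda_n}$, but this is harmless since it is divided by $n$ at fixed $t$). Assumption \eqref{eq:ccbar} then enters precisely, and only, to convert $\lim_{t\to\infty}a(t)$ into $\bar{c}$, i.e.\ to pass from the Fej\'er average back to the plain time average; optimizing over $\alpha$ gives \eqref{eqn:theorem}. Without replacing your Dirichlet-averaged Gram form by this construction, step (iii) does not go through.
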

$\{A,B\} \equiv AB+BA$ denotes the anti-commutator.
\begin{proof}
We start by considering the following finite-time-averaged self-adjoint operator
\begin{equation}
A_{\Lambda_{n},t}:=\frac{1}{\sqrt{n}}\frac{1}{t}\int_{0}^{t}\dd t'(\tau_{t'}(J_{\Lambda_{n}})-\alpha Q_{\Lambda_{n}}),
\label{eqn:ansatz}
\end{equation}
where  $\alpha \in \mathbb{R}$ is a free parameter.
Since the state $\omega_{\beta}$ is a positive linear functional, we have $\omega_{\beta}(A^{2}_{\Lambda_{n},t})\geq 0$ for any $t,\alpha \in \mathbb{R}$, $n\in\ZZ_+$, or equivalently:
\begin{eqnarray}
0 &\leq& \frac{1}{t^{2}}\int_{0}^{t}\dd t'\int_{0}^{t}\dd t'' \frac{1}{n} \omega_{\beta}(\tau_{t'}(J_{\Lambda_{n}})\tau_{t''}(J_{\Lambda_{n}})) \label{eq:AC} \\
&-& \frac{\alpha}{t} \int_0^t \dd t' \frac{1}{n}\left\{\omega_{\beta}(\tau_{t'}(J_{\Lambda_{n}})Q_{\Lambda_{n}})
+ \omega_{\beta}(Q_{\Lambda_{n}}\tau_{t'}(J_{\Lambda_{n}}))\right\}  \label{eq:mixedterms}\\
&+&\frac{\alpha^{2}}{n}\omega_{\beta}(Q^{2}_{\Lambda_{n}}). \label{eq:QQ}
\end{eqnarray}
We shall proceed to show that TL  $n\to\infty$ of all the terms in the above inequality exists, treating it term by term.

Let us first discuss in detail the mixed terms (\ref{eq:mixedterms}). 
Time invariance of the Gibbs state (\ref{eq:translt}) implies that the integrand of (\ref{eq:mixedterms}) can be rewritten as (writing now time integration variable as $t$)
\begin{equation}
\omega_{\beta}(J_{\Lambda_{n}}\tau_{-t}(Q_{\Lambda_{n}})) + \omega_{\beta}(\tau_{-t}(Q_{\Lambda_{n}})J_{\Lambda_{n}}).
\end{equation}
As both terms can be treated on equal footing, we shall focus on the first one 
$\omega_{\beta}(J_{\Lambda_{n}}\tau_{-t}(Q_{\Lambda_{n}}))$ and show that in the limit $n\to\infty$ the term becomes time-independent.
Using the explicit time evolution (\ref{eq:explicit}), the linearity of the state $\omega_\beta$, and an elementary integral triangular inequality, we estimate
\begin{eqnarray}
&&\left|  \omega_{\beta}(J_{\Lambda_{n}}\tau_{-t}(Q_{\Lambda_{n}})) -  \omega_{\beta}(J_{\Lambda_{n}} Q_{\Lambda_{n}}) \right| \nonumber \\
&&\le \int_{-t}^0\dd s\, \left|\omega_{\beta}(J_{\Lambda_{n}} \tau_{s}(B_{\rm L}+B_{\rm R}))\right| \nonumber\\
&&
\le \int_{-t}^0\dd s\, \left(
\left|\omega_{\beta}(J_{\Lambda_{n}} \tau_{s}(B_{\rm L}))\right| +
\left|\omega_{\beta}(J_{\Lambda_{n}} \tau_{s}(B_{\rm R}))\right|\right).
\label{eq:bound}
\end{eqnarray}
We note that, before taking the TL $n\to\infty$, the mixed terms (\ref{eq:mixedterms}) have to be multiplied by $1/n$, hence it is enough to show that all terms on the RHS of (\ref{eq:bound}), for arbitrary time variables $s$, {\em do not grow with} $n$, i.e. are bounded by constants that do not depend on $n$.
Again, it is enough to focus just on one of the terms in the integrand, say $|\omega_\beta(J_{\Lambda_n} \tau_s(B_{\rm L}))|$, while the other one can be treated in exactly the same way.
Expressing the operators in terms of local densities, we obtain:
\begin{figure}
\centering	
\includegraphics[width=1\columnwidth]{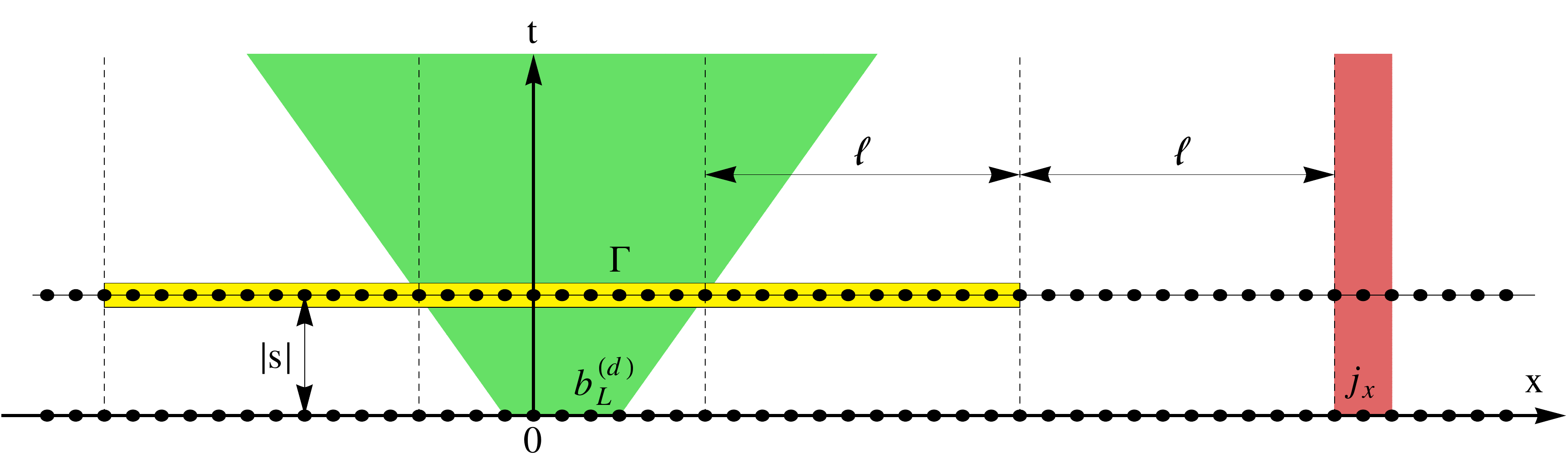}
\caption{A scheme visualizing the estimation of the crucial term $|\omega_\beta(j_x \tau_s(b^{(d)}_{\rm L}))|$.
The operator $\tau_s(b^{(d)}_{\rm L})$, supported near the left boundary at time $s=0$, is localized within an effective
light-cone (light-green region) and can be approximated with an error exponentially small in half-gap $\ell$ (indicated) to the support of $j_x$ (red strip), by projecting it onto a sublattice $\Gamma$ (painted in yellow).}
\label{fig:lightcones}
\end{figure}

\begin{equation}
|\omega_\beta(J_{\Lambda_n} \tau_s(B_{\rm L}))| \le 
\sum_{d=0}^n \sum_{x=1}^{n-d_j+1}\left|\omega_\beta(j_x \tau_s(b^{(d)}_{\rm L}))\right|.
\label{eq:xdsum}
\end{equation}
For the time being, let us fix the integration time variable $s$, and the summation variables $d,x$.
We first assume that the Lieb-Robinson light-cone emitting from the local operator $b_{\rm L}^{(d)}$, at time $s$, is not including $x$, i.e. there is a positive half-gap $\ell > 0$,
defined as (see Fig.~\ref{fig:lightcones}):
\begin{equation}
\ell = \frac{1}{2}\left( x - v|s| - \max\{d_h-2+d,d_b\}\right).
\end{equation}
Defining the chain of sites including the light-cone plus the half-gap of sites on each side as
\begin{equation}
\Gamma= \left[-d_h+2 - \lfloor v|s| + \ell \rfloor, \max\{d_h-2+d,d_b\} + \lfloor v|s| + \ell \rfloor\right] 
\label{eq:Gamma}
\end{equation}
we can estimate the term on the RHS of (\ref{eq:xdsum}) as
\begin{eqnarray}
\!\!\!\!\!\!|\omega_\beta(j_x \tau_s(b^{(d)}_{\rm L}))| &\le& 
|\omega_{\beta}(j_x (\tau_s(b^{(d)}_{\rm L}))_\Gamma)| + |\omega_{\beta}(j_x (\tau_s(b^{(d)}_{\rm L})-(\tau_s(b^{(d)}_{\rm L}))_\Gamma))|,
\end{eqnarray}
where the first term is further estimated using the ECP (\ref{eq:ECP}), also noting (\ref{eq:jzero}),
\begin{eqnarray}
|\omega_{\beta}(j_x (\tau_s(b^{(d)}_{\rm L}))_\Gamma)| &=& |\omega_{\beta}(j_x (\tau_s(b^{(d)}_{\rm L}))_\Gamma)-\omega_\beta(j_x)\omega_\beta((\tau_s(b^{(d)}_{\rm L}))_\Gamma)| 
\nonumber\\
&\le& \kappa  \| j\| \|   (\tau_s(b^{(d)}_{\rm L}))_\Gamma\| e^{-\rho(\ell-1)}, \label{eq:ECPapp}
\end{eqnarray}
while the second term is bounded by BHV form of the LRE (\ref{eq:BHV})
\begin{eqnarray}
&&|\omega_{\beta}(j_x (\tau_s(b^{(d)}_{\rm L})-(\tau_s(b^{(d)}_{\rm L}))_\Gamma))|
\le \| j\| \| \tau_s(b^{(d)}_{\rm L})-(\tau_s(b^{(d)}_{\rm L}))_\Gamma\| \label{eq:BHVapp} \\
&&\le \phi \max\{d_b+d_h-1,2 d_h-3+d\} \max\{ d_h(d_h-1) \gamma \| h\| ,\| b\|\} \| j \| e^{-\mu(\ell -1)-\xi d}. \nonumber
\end{eqnarray}
The norm of projected evolution can be estimated generously by another use of the triangular and the BHV inequalities
\begin{eqnarray}
\| (\tau_s(b^{(d)}_{\rm L}))_\Gamma \| &\le&
\| \tau_s(b^{(d)}_{\rm L}) \| + \| \tau_s(b^{(d)}_{\rm L}) - (\tau_s(b^{(d)}_{\rm L}))_\Gamma\| \label{eq:projnorm}\\
&\le& (1 + \phi \max\{d_b+d_h-1,2 d_h-3+d\}) \nonumber \\
&& \times \max\{d_h (d_h-1) \gamma \| h\|,\| b\|\} e^{-\xi d}.  \nonumber
\end{eqnarray}
Putting Eqs. (\ref{eq:ECPapp},\ref{eq:BHVapp},\ref{eq:projnorm}) together, writing the minimal exponent $\lambda = \min\{\rho,\mu\} > 0$, and taking suitable constants
$C,C' > 0$, which only depend on the local operator norms and dimensions but {\em not} on the size $n$ {\em neither} on the variables $x$, $d$, and $s$, we have:
\begin{equation}
|\omega_\beta(j_x \tau_s(b^{(d)}_{\rm L}))| \le (C d + C') e^{-\lambda \max\{0,\ell\}-\xi d }.
\label{eq:exponential}
\end{equation} 
For the bound to remain valid for an overlapping light-cone, $\ell < 0$, we should simply make sure that we choose $C$ and $C'$ large enough to satisfy the naive bound
$|\omega_\beta(j_x \tau_s(b^{(d)}_{\rm L}))| \le \| j\| \| b^{(d)}_{\rm L}\|$.
The whole term (\ref{eq:xdsum}) can now be estimated as, introducing $k(d) := \lfloor v|s| + \max\{d_h-2+d,d_b\}\rfloor$:
\begin{eqnarray}
&&\sum_{d=0}^n \sum_{x=1}^{n-d_j+1}\left|\omega_\beta(j_x \tau_s(b^{(d)}_{\rm L}))\right| \nonumber \\
&&\le \sum_{d=0}^n (C d + C') e^{-\xi d} \left(k(d) + 
\sum_{x'=0}^{\max\{0,n-k(d)\}} e^{-\lambda x'/2}\right) \nonumber \\
&&
\le \sum_{d=0}^\infty (C d + C') e^{-\xi d} \left( v|s| + d_h+d_b-2 + d + \sum_{x'=0}^\infty e^{-\lambda x'/2}\right) \nonumber\\
&&= \sum_{d=0}^\infty (C d^2 + C'' d + C''') e^{-\xi d}\nonumber\\ 
&&= \frac{e^\xi ((e^\xi+1)C + (e^\xi-1)(C''+(e^\xi-1)C'''))}{(e^\xi-1)^3} =: K < \infty,
\end{eqnarray}
where $C''=C' + C (v|s| + d_h + d_b - 2 + (1-e^{-\lambda/2})^{-1})$ and $C'''=C'  (v|s| + d_h + d_b - 2 + (1-e^{-\lambda/2})^{-1})$.
As $C''$ and $C'''$ are at most linear functions of time $s$, we have also that $K = K' + K'' |s|$ where $K'$ and $K''$ are
$n$ and $s$ independent constants. Exactly the same estimates applies for the other term $|\omega_{\beta}(J_{\Lambda_{n}} \tau_{s}(B_{\rm R}))|$,
so we have finally shown that the difference on the left-hand-side (LHS) of (\ref{eq:bound}) is bounded by
\begin{equation}
\left|  \omega_{\beta}(J_{\Lambda_{n}}\tau_{-t}(Q_{\Lambda_{n}})) -  \omega_{\beta}(J_{\Lambda_{n}} Q_{\Lambda_{n}}) \right| 
\le 2 K' |t| + K'' t^2.
\label{eq:KK}
\end{equation}

We shall now study the convergence properties of the sequences 
\begin{eqnarray}
w_n &=& \frac{1}{n}\omega_\beta(J_{\Lambda_n} Q_{\Lambda_n}),\nonumber\\
w'_n &=& \frac{1}{n}\omega_\beta(Q_{\Lambda_n}J_{\Lambda_n} ),\\
u_n &=& \frac{1}{n}\omega_\beta(Q_{\Lambda_n}^2), \nonumber
\end{eqnarray} 
and show that they are, in fact, Cauchy sequences.

Let us consider, for the time being, an abstract sequence of this type
\begin{equation}
v_n = \frac{1}{n}\omega_\beta(F_{\Lambda_n} G_{\Lambda_n}),
\end{equation} 
where 
\begin{equation}
F_{\Lambda_n}=\sum_{x=1}^{n-d_f+1} f_x, \quad G_{\Lambda_n}=\sum_{x=1}^{n-d_g+1} g_x, \quad f \in \AAA_{[0,d_f-1]},\quad  g \in \AAA_{[0,d_g-1]}, 
\end{equation}
with $\omega_\beta(f)=\omega_\beta(g)=0$.
Exploiting the translational invariance of the Gibbs state $\omega_\beta(f_x g_y) = \omega_\beta(f g_{y-x})$ we find (see Fig.~\ref{fig:congo})
\begin{eqnarray}
v_n &=& \frac{1}{n}\sum_{x=1}^{n-d_f+1} \sum_{y=1}^{n-d_g+1} \omega_\beta(f_x g_y) \nonumber \\
&=& \left(\textstyle{1-\frac{\max\{d_f,d_g\}-1}{n}}\right)\sum_{r=\min\{0,d_f-d_g\}}^{\max\{0,d_f-d_g\}} \omega_\beta(f g_r) \nonumber \\
&&+ \sum_{r=\max\{1,d_f-d_g+1\}}^{n-d_g}\left(\textstyle{1 - \frac{d_g-1+r}{n}}\right)\omega_\beta(f g_r)  \nonumber \\
&&+ \sum_{r=-n+d_f}^{\min\{-1,d_f-d_g-1\}} \left(\textstyle{1 - \frac{d_f-1-r}{n}}\right)\omega_\beta(f g_r).  \label{eq:mm} 
\end{eqnarray}
Clearly, ECP (\ref{eq:ECP}), combined with the trivial norm-bound,
\begin{equation}
|\omega(f g_r)| \le \|f\| \|g\| \min\{1,\kappa\exp(-\rho(r-d_f+1)),\kappa\exp(-\rho(-r-d_g+1))\} 
\label{eq:ECPx}
\end{equation}
guarantees boundedness of the sequence $|v_n| \le V < \infty$.
But we have more:
\begin{eqnarray}
&&|v_{n+1}-v_{n}| = \frac{1}{n(n+1)}\Biggl|
(\max\{d_f,d_g\}-1)\!\!\!\!\!\!\!\sum_{r=\min\{0,d_f-d_g\}}^{\max\{0,d_f-d_g\}}\!\!\!\!\!\!\!\omega_\beta(f g_r) \nonumber \\
&&+\!\!\!\!\!\!\!\sum_{r=\max\{1,d_f-d_g+1\}}^{n+1-d_g}\!\!\!\!\!\!\!(d_g-1+r)\omega_\beta(f g_r)+\!\!\!\!\!\!\!\sum_{r=-n-1+d_f}^{\min\{-1,d_f-d_g-1\}}\!\!\!\!\!\!\!(d_f-1-r)\omega_\beta(f g_r)
\Biggr| \nonumber \\
&& 
\le \frac{\| f\| \|g\|}{n(n+1)} \Bigl\{ (\max\{d_f,d_g\}-1)(|d_f-d_g|+1) \nonumber \\
&&\quad + \!\!\!\!\!\sum_{r=\max\{1,d_f-d_g+1\}}^{d_f-1}\!\!\!\!\!\!\!(d_g-1+r) + \kappa\!\sum_{r=d_f}^{n+1-d_g} (d_g-1+r)e^{-\rho(r-d_f+1)} \nonumber \\
&&\quad + \!\!\sum_{r=-d_g+1}^{\min\{-1,d_f-d_g-1\}}\!\!\!\!\!\!\!(d_f-1-r) + \kappa\!\!\!\!\!\!\sum_{r=-n-1+d_f}^{-d_g}\!\!\!\!\! (d_f - 1 - r) e^{-\rho(-r-d_g+1)}  
\Bigr\} \label{eq:50}\\
&&= \frac{\| f\| \|g\|}{n(n+1)} \Bigl\{ d_f^2+d_g^2+d_f d_g-2 d_f-2 d_g+1 + 2\kappa\!\!\!\!\!\!\!\sum_{r=1}^{n+1-d_f-d_g}\!\!\!\!\!\!\!\!(r\!+\!d_f\!+\!d_g\!-\!2) e^{-\rho r}\Bigr\}  \nonumber
\end{eqnarray}
where we have adopted a convention that $\sum_{r=x}^y(\ldots) = 0$ if $x > y$.
Finally, we complete the sum in the curly bracket to a geometric series and simplify the denominator to arrive at the quickly decreasing bound on the difference between the adjacent terms
\begin{equation}
|v_{n+1}-v_{n}| 
\le \frac{1}{n^2} \| f\| \|g \| \bigl\{ d_f^2+d_g^2+d_f d_g + \nu(d_f + d_g-1) + \zeta\bigr\},
\end{equation}
where $\nu:=2({\textstyle \frac{\kappa}{e^\rho-1}}-1)$, $\zeta:=\frac{2\kappa}{(e^\rho-1)^{2}}-1$,
which proves that $\{ v_n\}$ is a Cauchy sequence, i.e. $\lim_{n\to \infty} v_n$ exists and is finite. In fact it is equal to the limit
\begin{equation}
\lim_{n\to\infty} v_n = \lim_{n\to\infty} \sum_{x=-n}^n \omega_\beta(f g_x),
\label{eq:fglimit}
\end{equation}
since it can be shown -- again using (\ref{eq:mm},\ref{eq:ECPx}) -- that the difference of the terms of the sequences on the LHS and the RHS of (\ref{eq:fglimit}) is bounded by
${\rm const}/n$.

\begin{figure}
\centering
\includegraphics[width=0.7\columnwidth]{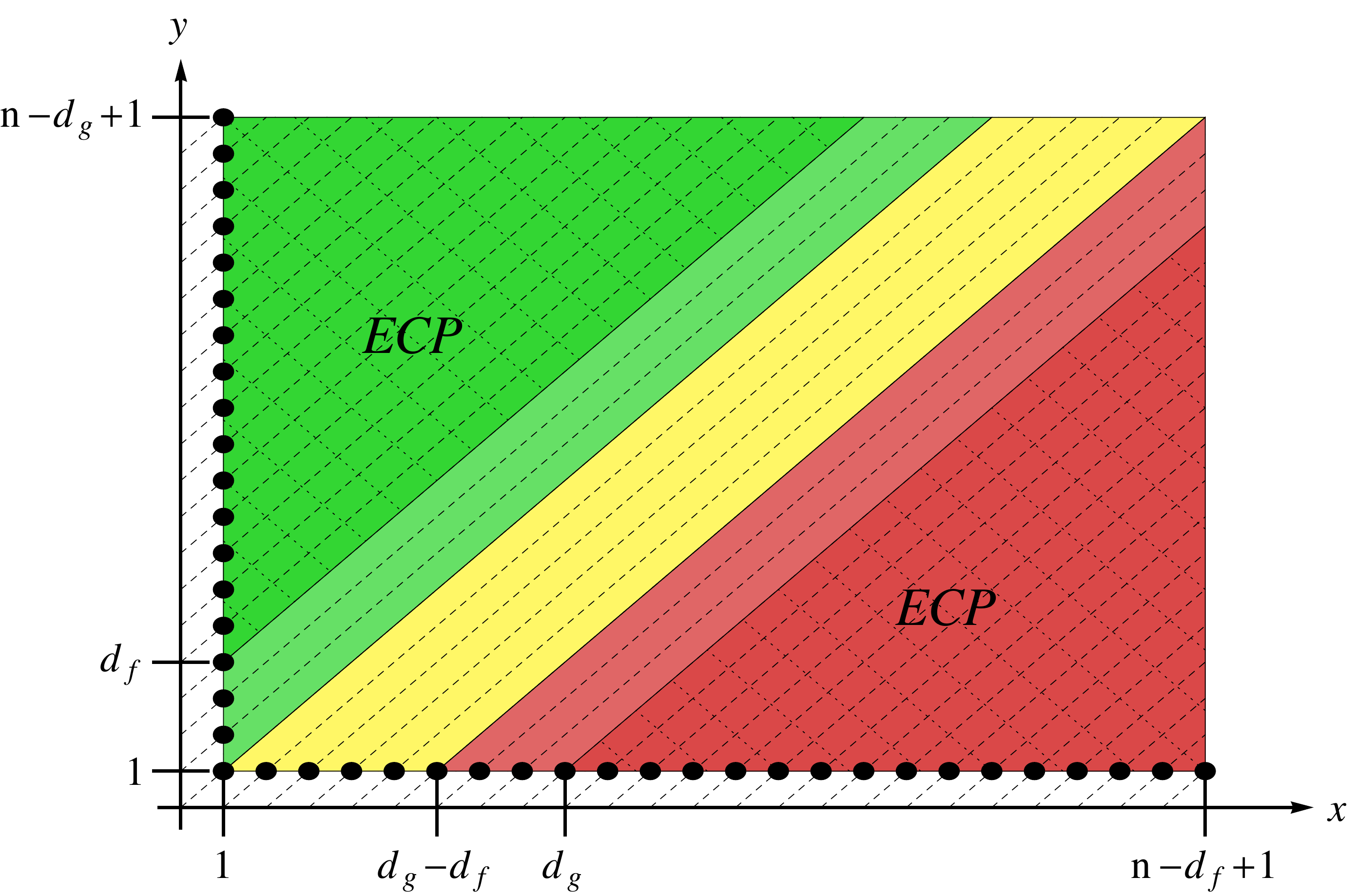}
\caption{A scheme visualizing the double summation over $x$ and $y$ for computing the sequence $v_n$ (\ref{eq:mm}).
Yellow, green, and red region, represent the terms under the first, the second, and the third sum on the RHS of (\ref{eq:mm}), respectively.
Light-green and light-red stripes denote the terms which are separated out as the first (left) summations in the last two lines of expression (\ref{eq:50}) before ECP (\ref{eq:ECP}) is applied to the remaining terms.}
\label{fig:congo}
\end{figure}

Similarly we now treat the sequence $w_n = n^{-1} \omega_\beta(J_{\Lambda_n} Q_{\Lambda_n})$. Writing $w_n=n^{-1}\sum_{d=1}^n \omega_{\beta}(J_{\Lambda_n} Q^{(d)}_{\Lambda_n})$,
and associating $f\equiv j$, $g\equiv q^{(d)}$, we arrive at (again in the last step completing the geometric series)
\begin{eqnarray}
&&|w_{n+1}-w_n| \le \frac{\gamma \| j\|}{n^2}
\sum_{d=1}^n e^{-\xi d} (d^2 + (d_j + \nu) d + d_j^2 + \nu (d_j-1)+\zeta
) \nonumber \\
&&\le \frac{\gamma\| j\|}{n^2} \frac{2 + (e^\xi-1)(d_j+3+\nu)+(e^\xi-1)^2((d_j+1+\nu)d_j+1+\zeta)}{(e^\xi-1)^3},
\end{eqnarray}
and with the same bound for the `transposed' sequence $w'_n = n^{-1}\omega_\beta(Q_{\Lambda_n} J_{\Lambda_n})$.
Analogously, for the sequence  $u_n = n^{-1}\omega_\beta(Q_{\Lambda_n} Q_{\Lambda_n})=n^{-1}\sum_{d,d'=1}^n \omega_{\beta}(Q^{(d)}_{\Lambda_n} Q^{(d')}_{\Lambda_n})$,
we obtain
\begin{eqnarray}
|u_{n+1}-u_n| &\le& \frac{\gamma^2}{n^2}
\sum_{d,d'=1}^n e^{-\xi (d+d')} (d^2+{d'}^2+d d' + \nu (d+d'-1)+\zeta) \nonumber \\
&\le& \frac{\gamma^2}{n^2} \frac{\zeta - \nu + e^\xi (2 - 2\zeta + e^\xi(\zeta + \nu + 3))}{(e^\xi-1)^4}.
\end{eqnarray}
We have thus shown that the limits 
\begin{eqnarray}
w&=&\lim_{n\to\infty}w_n = \sum_{d=1}^\infty \sum_{x\in\ZZ} \omega_\beta(j q^{(d)}_x), \nonumber \\
w'&=&\lim_{n\to\infty} w'_n = \sum_{d=1}^\infty \sum_{x\in\ZZ} \omega_\beta(q^{(d)} j_x),\\
u&=&\lim_{n\to\infty} u_n = \sum_{d,d'=1}^\infty \sum_{x\in\ZZ} \omega_\beta(q^{(d)}q^{(d')}_x), \nonumber
\end{eqnarray} 
exist and are finite (see Eq. (\ref{eq:fglimit}) for the RHS).
Moreover, Eq. (\ref{eq:KK}) implies that the distance between $w_n+w'_n$ and the integrand in (\ref{eq:mixedterms}),
\begin{equation}
z_n(t):= n^{-1}\left\{\omega_{\beta}(\tau_{t}(J_{\Lambda_{n}})Q_{\Lambda_{n}})
+ \omega_{\beta}(Q_{\Lambda_{n}}\tau_{t}(J_{\Lambda_{n}}))\right\}
\end{equation} 
is decreasing in $n$
\begin{equation}
\left| w_n+w'_n-z_n(t) \right| \le \frac{ 4 K'|t| + 2 K'' t^2}{n},
\end{equation} 
so the two sequences should have the same limit, i.e. $\lim_{n\to\infty}z_n(t)=w+w'$. Therefore, the
TL of (\ref{eq:mixedterms}) exists, it is independent of $t$, and equals $w + w'$. 
  
At last, let us devote also some attention to the first term (\ref{eq:AC}), which after exploiting the time-invariance of the Gibbs state and substituting for  the integration variable $s=t''-t'$, reads:
\begin{equation}
a_n(t) = \frac{1}{t}\int_{-t}^t \dd s \left(1 - \frac{|s|}{t}\right) c_n(s),\
\label{eq:an}
\end{equation}
where 
\begin{equation}
c_n(t) := \frac{1}{n}\omega_\beta(J_{\Lambda_n}\tau_t(J_{\Lambda_n})) = \sum_{r=-n+d_j}^{n-d_j}{\textstyle\left(1-\frac{|r|+d_j-1}{n}\right)}\omega_\beta(j_r \tau_t(j)).
\label{eq:cn}
\end{equation}
For each fixed time $t$, we can again show that $\{ c_n(t)\}$ is a Cauchy sequence, namely
\begin{equation}
|c_{n+1}(t)-c_n(t)| \le \frac{1}{n(n+1)}\sum_{r=-\infty}^\infty (|r|+d_j-1) |\omega_\beta(j_r \tau_t(j))| < \frac{L|t|+L'}{n^2} 
\end{equation}
where the constants $L,L'$ follow from bounding the sum of exponentially decaying (in $x$)
envelope of the spatio-temporal autocorrelation function
\begin{equation}
|\omega_\beta(j_x \tau_t(j))| \le \| j\|^2 \min\{1, e^{-\lambda(|x|-d_j-v|t|)/2}\}.
\label{eq:expon}
\end{equation}
Eq. (\ref{eq:expon}) can be proven in exactly the same way as Eq.(\ref{eq:exponential}) -- combining the BHV inequality and the ECP -- simply taking $j$ to substitute the boundary operator $b^{(d)}_{\rm L}$.
This implies that the limit of $c_n(t)$ exists. It also coincides with the expression (\ref{eq:cdef1}) of $c(t)$, as the difference between the terms of sequences (\ref{eq:cn}) and (\ref{eq:cdef1}) can be bounded by ${\rm const}/n$ (again using (\ref{eq:expon})). 

This proves the point (i) of the theorem.

As $c_n(t)$ converges uniformly in $t$, the limit $a(t)=\lim_{n\to\infty} a_n(t)$ of (\ref{eq:an}) exists as well and is equal to
\begin{equation}
a(t) = \frac{1}{t}\int_{-t}^t \dd s \left(1-\frac{|s|}{t}\right) c(s).
\end{equation}
Similarly, a uniform convergence of $c_n(t)$ can be used to express the Drude weight (\ref{eq:Dbeta}) as
\begin{equation}
D_\beta = \frac{\beta}{2} \lim_{t\to\infty}\frac{1}{2t}\lim_{n\to\infty} \int_{-t}^t \dd s\,c_n(s) = \frac{\beta}{2} \lim_{t\to\infty}\frac{1}{2t}\int_{-t}^t \dd s\,c(s) = \frac{\beta}{2}\bar{c}
\end{equation}
proving the point (ii) of the theorem.

Finally, writing the TL $n\to\infty$ of all the terms in the inequality (\ref{eq:AC}-\ref{eq:QQ})
we have
\begin{equation}
0 \le a(t) - \alpha (w + w') + \alpha^2 u.
\label{eq:at}
\end{equation}
One should observe an obvious identity
\begin{equation}
a(t) - \bar{c} = 2 \left(\frac{1}{2t}\int_{-t}^t \dd s\, c(s) - \bar{c}\right) - \frac{1}{t^2} \int_{-t}^t \dd s |s|(c(s) - \bar{c}),
\end{equation}
where, according to the assumptions (\ref{eq:cbar}) and (\ref{eq:ccbar}), the limits $t\to\infty$ of both terms on the RHS exist and vanish, hence $\lim_{t\to\infty} a(t) = \bar{c}$.
Taking then the limit $t\to\infty$ of the inequality (\ref{eq:at}) and optimizing it with respect to the free parameter $\alpha$,
we arrive to the final bound
\begin{equation}
\bar{c} \ge \frac{(w+w')^2}{4 u}.
\end{equation}
This proves the point (iii) of the theorem.
\end{proof}
Let us conclude this section by making a few remarks.
\begin{remark}
The key in our proof was a convenient form of the ansatz (\ref{eqn:ansatz}). It  has been inspired by combining a finite-time-average version of the original idea of Mazur 
\cite{Mazur} with an expected central limit theorem behavior in size $n$ for local spin chains (hence the $1/\sqrt{n}$ prefactor) which allows taking the TL $n\to\infty$ first.
\end{remark}
\begin{remark}
It is not clear at present if in the context of quasi-local $C^*$ dynamical systems, the assumptions of point (ii) of the Theorem 1, e.g. on the existence of the time average $\bar{c}$, are in fact needed, or can be separately proven. In cases of more general dynamics it is of course easy to come up with counterexamples of dynamics for which $\bar{c}$ does not exist. For example, one may formally construct a gaussian process with the (2-point) correlator $c(t) = (-1)^{\lfloor \log_2 (t/t_0)\rfloor}$ for which, clearly, the average $\bar{c}$ does not exist.
\end{remark}
\begin{remark}
Note that the RHS of our estimate (\ref{eqn:theorem}) can in fact vanish in some cases, e.g. when the limits $w,w'$ vanish. In such cases our bound is nearly trivial, and merely expresses the fact that the Drude weight as defined by (\ref{eq:Dbeta}) is always non-negative.
\end{remark}

\section{Generalization to the case of several almost-conserved quantities}
\label{sect:general}

The bound (\ref{eqn:theorem}) of the Theorem~\ref{theo:main} can be readily generalized to the case where one has several almost-conserved quantities.

\begin{theorem}
\label{theo:general}
Let $Q_{[k]\Lambda_{n}}$, $k=1,\ldots,m$ be a set of $m$ self-adjoint, almost-con\-served quantities with densities $q^{(d)}_{[k]}\in \AAA_{[0,d-1]}$, all satisfying {\em Definition~\ref{def:1}} (Eq. \ref{eqn:commutator}).
Then, under the assumptions of the Theorem 1, together with the points (i) and (ii) of Theorem 1 we have the following lower bound on $D_\beta$ (\ref{eq:Dbeta})
\begin{equation}
D_\beta \ge \frac{\beta}{2} \sum_{k,l=1}^m w_k (\mm{U}^{-1})_{k,l} w_l
\label{eq:generalbound}
\end{equation}
where
\begin{equation}
w_k = \frac{1}{2} \lim_{n\to\infty} \omega_\beta(\{J_{\Lambda_n},Q_{[k]\Lambda_n}\}) = \frac{1}{2}\sum_{d=1}^\infty\sum_{x\in\ZZ} 
\omega_\beta(\{ j_x, q^{(d)}_{[k]}\})
\label{eq:wk}
\end{equation}
and $\mm{U}^{-1}$ is an inverse of $m\times m$ covariance matrix
\begin{equation}
\mm{U}_{k,l} = \frac{1}{2} \lim_{n\to\infty} \omega_\beta(\{Q_{[k]\Lambda_n},Q_{[l]\Lambda_n}\}) = \frac{1}{2}\sum_{d,d'=1}^\infty\sum_{x\in\ZZ} 
\omega_\beta(\{\eta_x(q^{(d)}_{[k]}),q^{(d')}_{[l]}\}).
\label{eq:Ukl}
\end{equation}
\end{theorem}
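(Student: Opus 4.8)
The plan is to rerun the argument of the proof of Theorem~\ref{theo:main} with the single almost-conserved quantity replaced by an arbitrary real linear combination of the $m$ available ones. I would introduce the $m$-parameter family of self-adjoint operators
\[
A_{\Lambda_{n},t}:=\frac{1}{\sqrt{n}}\,\frac{1}{t}\int_{0}^{t}\dd t'\Bigl(\tau_{t'}(J_{\Lambda_{n}})-\sum_{k=1}^{m}\alpha_{k}\,Q_{[k]\Lambda_{n}}\Bigr),\qquad(\alpha_{1},\dots,\alpha_{m})\in\RR^{m},
\]
and use positivity of the Gibbs state, $\omega_{\beta}(A_{\Lambda_{n},t}^{2})\ge 0$. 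Expanding the square reproduces exactly the three kinds of terms already met in (\ref{eq:AC}--\ref{eq:QQ}): the current--current piece $a_{n}(t)$, untouched; a mixed piece for each $k$, proportional to $n^{-1}\omega_{\beta}(\{\tau_{t'}(J_{\Lambda_{n}}),Q_{[k]\Lambda_{n}}\})$; and the quadratic piece $n^{-1}\omega_{\beta}\bigl((\sum_{k}\alpha_{k}Q_{[k]\Lambda_{n}})^{2}\bigr)$.

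Next I would note that every convergence statement of the proof of Theorem~\ref{theo:main} carries over verbatim, because each $Q_{[k]\Lambda_{n}}$ individually satisfies {\em Definition~\ref{def:1}}: estimate (\ref{eq:KK}) holds for every $Q_{[k]\Lambda_{n}}$ (with $B_{\rm L},B_{\rm R}$ built from $q^{(d)}_{[k]}$), so each mixed contribution is Cauchy in $n$, its thermodynamic limit is $t$-independent, and equals $2w_{k}$ with $w_{k}$ as in (\ref{eq:wk}); the Cauchy estimate for $n^{-1}\omega_{\beta}(Q_{[k]\Lambda_{n}}Q_{[l]\Lambda_{n}})$ is identical to the one for $u_{n}$ (now carrying two envelopes $e^{-\xi d}$ and $e^{-\xi d'}$), so that $n^{-1}\omega_{\beta}\bigl((\sum_{k}\alpha_{k}Q_{[k]\Lambda_{n}})^{2}\bigr)\to\sum_{k,l}\alpha_{k}\alpha_{l}\mm{U}_{k,l}$ with $\mm{U}_{k,l}$ as in (\ref{eq:Ukl}); and the analysis of $a_{n}(t)$, hence points (i) and (ii), is unchanged. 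Taking $n\to\infty$, then $t\to\infty$, in $\omega_{\beta}(A_{\Lambda_{n},t}^{2})\ge 0$, and using $\lim_{t\to\infty}a(t)=\bar{c}$, gives
\[
0\le\bar{c}-2\sum_{k=1}^{m}\alpha_{k}w_{k}+\sum_{k,l=1}^{m}\alpha_{k}\alpha_{l}\,\mm{U}_{k,l}.
\]

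It then remains to optimize this inequality over $(\alpha_{1},\dots,\alpha_{m})\in\RR^{m}$. The matrix $\mm{U}$ is positive semidefinite, since $\sum_{k,l}\alpha_{k}\alpha_{l}\mm{U}_{k,l}=\lim_{n\to\infty}n^{-1}\omega_{\beta}\bigl((\sum_{k}\alpha_{k}Q_{[k]\Lambda_{n}})^{2}\bigr)\ge 0$ by positivity of $\omega_{\beta}$. Assuming, without loss of generality, that the family $\{Q_{[k]}\}$ is chosen so that $\mm{U}$ is strictly positive definite (otherwise pass to a maximal linearly independent subfamily, or replace $\mm{U}^{-1}$ by the Moore--Penrose pseudo-inverse), the right-hand side is minimized at $\alpha_{k}=\sum_{l}(\mm{U}^{-1})_{k,l}w_{l}$ with minimal value $\bar{c}-\sum_{k,l}w_{k}(\mm{U}^{-1})_{k,l}w_{l}$. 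Hence $\bar{c}\ge\sum_{k,l}w_{k}(\mm{U}^{-1})_{k,l}w_{l}$, and multiplying by $\beta/2$ and invoking (\ref{eq:Dc}) yields (\ref{eq:generalbound}).

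I do not expect a genuinely new analytic obstacle here: all the hard estimates --- the ECP (\ref{eq:ECP}) together with the BHV form (\ref{eq:BHV}) of the Lieb--Robinson bound --- are already in place from the proof of Theorem~\ref{theo:main}, and the only structural novelty is the multivariate minimization, which is elementary. The single point requiring care is the possible degeneracy of the covariance matrix $\mm{U}$, i.e. the almost-conserved quantities becoming linearly dependent in the thermodynamic limit; this is dealt with by the pseudo-inverse/restriction caveat above. Equivalently, one may first Gram--Schmidt-orthonormalize the $Q_{[k]}$ with respect to the bilinear form $\langle X,Y\rangle:=\tfrac{1}{2}\lim_{n\to\infty}n^{-1}\omega_{\beta}(\{X_{\Lambda_{n}},Y_{\Lambda_{n}}\})$, in which case (\ref{eq:generalbound}) reduces to a sum of independent single-$Q$ bounds of the form (\ref{eqn:theorem}).
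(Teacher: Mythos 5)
Your proposal is correct and follows essentially the same route as the paper: the paper's proof of Theorem~\ref{theo:general} simply replaces the ansatz (\ref{eqn:ansatz}) by the $m$-parameter version, reruns the estimates of Theorem~\ref{theo:main} term by term, and optimizes the resulting quadratic form in $(\alpha_1,\dots,\alpha_m)$, using (\ref{eq:fglimit}) for the compact expressions (\ref{eq:wk},\ref{eq:Ukl}). Your explicit treatment of a possibly degenerate covariance matrix $\mm{U}$ (pseudo-inverse or restriction to a linearly independent subfamily) is a sensible caveat that the paper leaves implicit.
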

\begin{proof}
The proof involves exactly the same steps as the proof of the Theorem~\ref{theo:main}, except that the ansatz (\ref{eqn:ansatz}) is replaced by a more general one
\begin{equation}
A_{\Lambda_{n},t}:=\frac{1}{\sqrt{n}}\frac{1}{t}\int_{0}^{t}\dd t'\left(\tau_{t'}(J_{\Lambda_{n}})-\sum_{k=1}^m\alpha_{k}Q_{[k]\Lambda_{n}}\right),
\end{equation}
with $m$ real free parameters $\alpha_k$.
The final step of optimization of $m$-dimensional quadratic form (in $\alpha_k$) then results in (\ref{eq:generalbound}).
The limit-identity (\ref{eq:fglimit}) is used to write the compact expressions on the RHSs of (\ref{eq:wk},\ref{eq:Ukl}).
\end{proof}

\section{Examples}
\label{sect:examples}

As a pool of nontrivial examples let us discuss the anisotropic Heisenberg spin 1/2 chain (the so-called $XXZ$ model).
Here $N=2$, and the local algebra $\AAA_{[0]}$ is spanned by Pauli matrices $\sigma^{s}$, $s\in\{0,\x,\y,\z\}$, with $\sigma^0 = \one$, or $\sigma^{\pm} = \frac{1}{2}(\sigma^{\rm x} \pm \ii \sigma^{\rm y})$.
The Hamiltonian density reads
$h = \sigma^\x_0 \sigma^\x_1 + \sigma^\y_0 \sigma^\y_1 + \Delta \sigma^\z_0 \sigma^\z_1$, $d_h=2$, where $\Delta$ is the anisotropy parameter.

Grabowski and Mathieu \cite{Grabowski} have shown that an infinite sequence of nontrivial local conservation laws of an infinite $XXZ$ chain can be formally constructed using the {\em boost operator}
$B= \frac{1}{2}\sum_{x\in\ZZ} x h_x$, namely 
\begin{equation}
Q_{[k+1]} = [B,Q_{[k]}], \; k = 2,3,\ldots,\qquad Q_{[2]} = \sum_{x\in\ZZ} h_x,
\label{eq:Qk}
\end{equation}
supplemented with the trivial on-site conservation law - the total magnetization - $Q_{[1]} \equiv M = \sum_{x\in\ZZ} \sigma^\z_x$.

Translationally invariant operators (\ref{eq:Qk}) which can clearly be written in terms of $k-$site densities $Q_{[k]}=\sum_{x\in\ZZ} \eta_x(q_{[k]})$, $q_{[k]}\in\AAA_{[0,k-1]}$,
are strictly conserved, i.e. they exactly commute with the Hamiltonian $Q_{[2]}$ and among each other $[Q_{[k]},Q_{[l]}] = 0$, {\em only} for infinite lattice ($\ZZ$) or for periodic boundary conditions.
Note however, that they {\em do not exist} as elements of $C^*$-algebra $\AAA$ in the former case, while in the latter case the `limit by inclusion' construction $\Lambda \to\ZZ$ does not work.

In the setup of $C^*$-algebraic statistical mechanics, the operators (\ref{eq:Qk}) should be considered on a finite open chain $\Lambda_n$, written as $Q_{[k]\Lambda_n}$, being the elements of $\AAA$.
Interestingly, it has been shown again by Grabowski and Mathieu \cite{Grabowski2} that for open boundary conditions, i.e. considering the Hamiltonian $H_{\Lambda_n}$, half of the conservation laws (for {\em odd} $k$) are destroyed,
while the other half (for {\em even} $k$) can be amended by adding terms supported near the boundary of $\Lambda_n$: $Q'_{[2l]\Lambda_n} = Q_{[2l]\Lambda_n} + Q_{[2l]{\rm L}} + Q_{[2l]{\rm R}}$, such that $[Q'_{[2l]\Lambda_n},Q'_{[2l']\Lambda_n}]=0$,
for $l,l'=1,\ldots,\lfloor n/2\rfloor$.
Nevertheless, as we have shown in the sect.~\ref{sect:main},\ref{sect:general}, one does not need {\em exact} conservation to state dynamic susceptibility bounds (\ref{eqn:theorem},\ref{eq:generalbound}).
Therefore, we rewrite the procedure (\ref{eq:Qk}) of \cite{Grabowski} in terms of densities $q_{[k]} \in \AAA_{[0,k-1]}$ and boundary remainders $p_{[k]} \in \AAA_{[0,k]}$
such that one has {\em almost-commutation relations}
\begin{equation}
\ii [H_{\Lambda_n},Q_{[k]\Lambda_n}] = \eta_1(p_{[k]}) - \eta_{n-k}(p_{[k]}).
\end{equation}
Operators $p_{[k]}$, which can be interpreted as the current densities, are actually determined from the local continuity equations $(\dd/\dd t) q_{[k]} = \eta_{-1}(p_{[k]}) - p_{[k]}$, or
\begin{equation}
p_{[k]}-\eta_1(p_{[k]}) = \ii \sum_{x=0}^k [h_x,\eta_1(q_{[k]})].
\label{eq:rec1}
\end{equation}
Local-algebraic version of the boost relation (\ref{eq:Qk}) gives the other recurrence relation which determines the charge density in the next order in terms of the charge and current densities of the previous order:
\begin{equation}
q_{[k+1]} = \frac{1}{2}p_{[k]} + \frac{\ii}{2}\sum_{x=0}^{k-1}(x + 1)[h_x,q_{[k]}], \quad k=2,3\ldots
\label{eq:rec2}
\end{equation}
Clearly, $p_{[1]} = j=2(\sigma^\x_0 \sigma^\y_1 - \sigma^\y_0\sigma^\x_1)$ is just the spin current, $q_{[2]}=h$, while the first few higher charges and densities can easily be obtained solving the recurrence (\ref{eq:rec1},\ref{eq:rec2}) by means of some computer algebra
\footnote{Using the {\em Mathematica} code {\tt http://chaos.fmf.uni-lj.si/prosen?action=AttachFile\&do\\=get\&target=PauliAlgebra.nb} one may obtain explicit form of $q_{[k]}$ and $p_{[k-1]}$ up to $k=10$.}
\begin{eqnarray}
q_{[3]} &=&-\Delta  \sigma^{\x\y\z}+\Delta  \sigma^{\y\x\z}-\Delta  \sigma^{\z\x\y}+\Delta  \sigma^{\z\y\x}+\sigma^{\x\z\y}-\sigma^{\y\z\x}, \nonumber\\
q_{[4]} &=&  -2 \Delta ^2 (\sigma^{\z\x\x\z} + \sigma^{\z\y\y\z})-(2 \Delta ^2+2) (\sigma^{\x\x00}+ \sigma^{\y\y00})+2 \Delta ( \sigma^{\x0\x0}+  
   \sigma^{\x\y\x\y}\nonumber\\&-&  \sigma^{\x\y\y\x}+  \sigma^{\x\z\x\z}+ \sigma^{\y0\y0}-\sigma^{\y\x\x\y}+ \sigma^{\y\x\y\x}+ \sigma^{\y\z\y\z}+ 
   \sigma^{\z0\z0}+ \sigma^{\z\x\z\x}+\sigma^{\z\y\z\y}\nonumber\\&-&2  \sigma^{\z\z00})-2 \sigma^{\x\z\z\x}-2 \sigma^{\y\z\z\y}, \nonumber\\   
q_{[5]} &=&  (4 \Delta ^3+14 \Delta )( 
\sigma^{\x\y\z00}-\sigma^{\y\x\z00}+\sigma^{\z\x\y00}-\sigma^{\z\y\x00}) 
+ 6 \Delta ^2 (-\sigma^{\x0\y\z0}+\sigma^{\x\y\x\x\z}\nonumber\\ &+& \sigma^{\x\y\y\y\z}+ \sigma^{\y0\x\z0}-\sigma^{\y\x\x\x\z}-\sigma^{\y\x\y\y\z}- \sigma^{\z\x0\y0}+\sigma^{\z\x\x\x\y}-\sigma^{\z\x\x\y\x}-\sigma^{\z\x\z\y\z}\nonumber\\&+&
   \sigma^{\z\y0\x0}+\sigma^{\z\y\y\x\y}-\sigma^{\z\y\y\y\x}+\sigma^{\z\y\z\x\z})+(10 \Delta ^2+8)(-\sigma^{\x\z\y00}+ \sigma^{\y\z\x00})\nonumber\\&+&
   6 \Delta ( \sigma^{\x0\z\y0}- \sigma^{\x\y0\z0}- \sigma^{\x\y\x\z\x}- \sigma^{\x\y\y\z\y}+ \sigma^{\x\z0\y0}- 
   \sigma^{\x\z\x\x\y}+ \sigma^{\x\z\x\y\x}+ \sigma^{\x\z\z\y\z}\nonumber\\&-& \sigma^{\y0\z\x0}+ \sigma^{\y\x0\z0}+ \sigma^{\y\x\x\z\x}+ \sigma^{\y\x\y\z\y}-  \sigma^{\y\z0\x0}- \sigma^{\y\z\y\x\y}+ \sigma^{\y\z\y\y\x}- \sigma^{\y\z\z\x\z}\nonumber\\
   &-& \sigma^{\z0\x\y0}+ \sigma^{\z0\y\x0}+ 
   \sigma^{\z\x\z\z\y}- \sigma^{\z\y\z\z\x})-6 \sigma^{\x\z\z\z\y}+6 \sigma^{\y\z\z\z\x},\nonumber\\
p_{[2]} &=& -2 q_{[3]},\nonumber\\
p_{[3]} &=&  2\Delta ^2 (\sigma^{\z\x\x\z}+\sigma^{\z\y\y\z})-(2\Delta ^2+2)(\sigma^{0\x\x0}+\sigma^{0\y\y0})+2 \Delta(-2\sigma^{0\z\z0}- 
   \sigma^{\x\y\x\y}
   \nonumber\\&+&  \sigma^{\x\y\y\x}- \sigma^{\x\z\x\z}+  \sigma^{\y\x\x\y}- \sigma^{\y\x\y\x}- \sigma^{\y\z\y\z}- \sigma^{\z\x\z\x}- \sigma^{\z\y\z\y}) + 2\sigma^{\x\z\z\x}+2\sigma^{\y\z\z\y}, \nonumber\\
p_{[4]} &=& (4 \Delta ^3+4 \Delta )( \sigma^{0\x\y\z0}-\sigma^{0\y\x\z0}-\sigma^{\z\x\y00}+\sigma^{\z\y\x00})+4 \Delta ^2(-\sigma^{\x\y\x\x\z}-\sigma^{\x\y\y\y\z} \nonumber\\&+&\sigma^{\y\x\x\x\z}+\sigma^{\y\x\y\y\z}+
   \sigma^{\z\x0\y0}-\sigma^{\z\x\x\x\y}+\sigma^{\z\x\x\y\x}+\sigma^{\z\x\z\y\z}-\sigma^{\z\y0\x0}-\sigma^{\z\y\y\x\y}\nonumber\\&+&
   \sigma^{\z\y\y\y\x}-\sigma^{\z\y\z\x\z})+(4 \Delta ^2+4)(-\sigma^{0\x\z\y0}+\sigma^{0\y\z\x0}+ 
   \sigma^{\x\z\y00}-\sigma^{\y\z\x00}) \nonumber\\&+&
   4 \Delta (  2\sigma^{0\z\x\y0}-2\sigma^{0\z\y\x0}+ \sigma^{\x\y0\z0}+ \sigma^{\x\y\x\z\x}+ \sigma^{\x\y\y\z\y}-2\sigma^{\x\y\z00}-\sigma^{\x\z0\y0}\nonumber\\&+&\sigma^{\x\z\x\x\y}-\sigma^{\x\z\x\y\x}-\sigma^{\x\z\z\y\z}-\sigma^{\y\x0\z0}-\sigma^{\y\x\x\z\x}-\sigma^{\y\x\y\z\y}+2\sigma^{\y\x\z00}+\sigma^{\y\z0\x0}\nonumber\\&+&\sigma^{\y\z\y\x\y}-\sigma^{\y\z\y\y\x}+
    \sigma^{\y\z\z\x\z}-\sigma^{\z\x\z\z\y}+\sigma^{\z\y\z\z\x})+4 \sigma^{\x\z\z\z\y}-4 \sigma^{\y\z\z\z\x},
\end{eqnarray}
where we use the notation $\sigma^{s_1 s_2\ldots s_k} \equiv \sigma^{s_1}\otimes \sigma^{s_2}\otimes \cdots \sigma^{s_k}$, $s_l \in\{0,\x,\y,\z,\pm\}$.
It has been shown in Ref. \cite{Zotos} that $Q_{[k]}$ can be applied to the Mazur inequality \cite{Mazur,Suzuki} with periodic boundary conditions in mind, to yield finite, non-vanishing spin Drude weights
for a general $XXZ$ model with a transverse magnetic field of strength $\chi$ added to the Hamiltonian density\footnote{
Note that considering non-zero transverse field $\chi$ is -- in TL -- equivalent to considering symmetry sectors with
non-vanishing total (conserved) magnetization. }, 
$h'=h + \chi \sigma^\z_0$,  at finite (non-zero) or even infinite temperature (where the evaluation of expectations $\omega_\beta(A)$ becomes easiest). This interesting result called for deeper theoretical understanding, as the Suzuki's proof \cite{Suzuki} only allows to consider the {\em incorrect} order of limits, namely $t\to\infty$ first, for a finite periodic system, and only then $n\to\infty$.
This problem has been now settled in the present paper.

Furthermore, it has been pointed out in \cite{Zotos}, that for a vanishing external field $\chi=0$, the Mazur bound on the spin Drude weight always vanishes, for any set of $Q_{[k]}$ from (\ref{eq:Qk}). This results from opposite `spin-flip' symmetry $\hat{\cal S} : \sigma^\z \to - \sigma^\z, 
\sigma^\pm \to \sigma^\mp$ of the spin-current, $\hat{\cal S}j = -j\hat{\cal S}$, and of the conserved charges, $\hat{\cal S}q_{[k]} = q_{[k]}\hat{\cal S}$. Non-vanishing magnetic field breaks the spin-flip symmetry of the Hamiltonian density and makes it possible that coefficients $w_k$ (\ref{eq:wk}) are non-vanishing and the bound (\ref{eq:generalbound}) is strictly positive.
On the other hand, there has been clear numerical evidence (see e.g. \cite{Meisner}) suggesting finite Drude weight and ballistic spin transport
at any temperature even at vanishing magnetic field strength $\chi=0$, in the easy-plane regime $|\Delta| < 1$.
This seemed to suggest that another, nontrivial quasi-local conservation law should exist with the same spin-flip symmetry as that of a spin current.

Indeed such a missing translationally invariant quasi-local conservation law $Q$ with negative spin-flip symmetry has recently been found~\cite{ProsenXXZ}, in case when $|\Delta| < 1$, satisfying almost commutation condition
\begin{equation}
[H_{\Lambda_n},Q_{\Lambda_n}] = 
-2\ii \sigma^\z_1 + 2\ii \sigma^\z_n.
\end{equation}
It has been shown that $Q$ admits a simple matrix-product representation in terms of infinite rank, almost-diagonal matrix operators
$\mm{A}_0,\mm{A}_{\pm}$, acting on an auxiliary Hilbert space with orthonormal basis labeled as
$\{ \ket{\La},\ket{\Ra},\ket{1},\ket{2},\ldots \}$,
\begin{eqnarray}
\mm{A}_0 &=& 
\ket{\La}\bra{\La} + \ket{\Ra}\bra{\Ra} + \sum_{r=1}^\infty \cos\left(r\varphi\right) \ket{r}\bra{r}, \nonumber \\
\mm{A}_+ &=& \ket{\La}\bra{1} + \sum_{r=1}^\infty \sin\left(2\left\lfloor \frac{r\!+\!1}{2}\right\rfloor \varphi\right) \ket{r}\bra{r\!+\!1},\label{eq:explicitA}\\
 \mm{A}_- &=& \ket{1}\bra{\Ra} - \sum_{r=1}^\infty \sin\left(\!\left(2\left\lfloor \frac{r}{2}\right\rfloor\!+\!1\right)\varphi\right)\ket{r\!+\!1}\bra{r}, \nonumber
\end{eqnarray}
where $\Delta = \cos \varphi$.
Namely, $Q$ satisfies the conditions of the Definition 1, with local densities of order $d\ge 2$, ($q^{(1)}=0$), generated as
\begin{eqnarray}
q^{(d)} &=& \ii\!\!\!\!\sum_{s_2,\ldots,s_{d-1}\in\{0,\pm\}}\!\!\!\bra{\La}\mm{A}_+\mm{A}_{s_2}\cdots \mm{A}_{s_{d-1}} \mm{A}_{-}\ket{\Ra} \label{eq:QZ}\\
&&\qquad\qquad\times\quad (\sigma^{+}\otimes\sigma^{s_2 \ldots s_{d-1}}\otimes\sigma^{-} - \sigma^{-}\otimes\sigma^{(-s_2) \ldots (-s_{d-1})}\otimes\sigma^{+}).
\nonumber
\end{eqnarray}
For example, the first few orders read explicitly:
\begin{eqnarray}
q^{(2)} &=& \ii(\sigma^{+-}-\sigma^{-+}) = {\textstyle\frac{1}{4}}j,\\
q^{(3)} &=& \ii \Delta ( \sigma^{+0-}-\sigma^{-0+}),\nonumber\\
q^{(4)} &=& \ii \Delta^2(\sigma^{+00-}-\sigma^{-00+}) + 2\ii \Delta (\Delta^2-1)(\sigma^{++--}-\sigma^{--++}). \nonumber
\end{eqnarray}
For the resonant values of the anisotropy $\Delta = \cos(\pi l/m)$, for coprime $l,m\in \ZZ$, $m> 1$, the rank of matrices (\ref{eq:explicitA})
becomes finite, i.e. it is $m+1$, whence the norm of $q^{(d)} $ can be estimated  $\| q^{(d)} \| < \gamma e^{-\xi d}$ by powering the
following rank-$(m+1)$ {\em transfer matrix}
\begin{eqnarray}
\mm{T} &=& \ket{\La}\bra{\La} +\ket{\Ra}\bra{\Ra} + \frac{1}{2}(\ket{\La}\bra{1}+\ket{1}\bra{\Ra})
+ \sum_{r=1}^{m-1} \biggl\{ \cos^2\!\left( \frac{\pi r l}{m}\right) \ket{r}\bra{r} \\ 
&+& \frac{1}{2}\sin^2\!\left(\!2\left\lfloor\!\frac{r\!+\!1}{2}\!\right\rfloor\!\frac{\pi l}{m}\right) \ket{r}\bra{r\!+\!1} + \frac{1}{2} \sin^2\!\left(\!\left(2\left\lfloor \frac{r}{2}\right\rfloor\!+\!1\right)\frac{\pi l}{m}\right)\ket{r\!+\!1}\bra{r}\biggr\},  
\nonumber
\end{eqnarray}
whose subleading eigenvalue is strictly smaller than $1$ \cite{ProsenXXZ}.
In this regime, one can use (\ref{eq:QZ}) in our Theorem 1 to bound $D_\beta$ even for a vanishing transverse magnetic field $\chi=0$.
The calculation can be made exact for infinite temperature $\beta\to 0$, where we present a rigorous lower bound
expressed as \cite{ProsenXXZ}
\begin{equation}
\lim_{\beta\to 0} \frac{D_\beta}{\beta} \ge 4 D_Z, \quad D_Z := \frac{1}{4} \lim_{n\to \infty} \frac{n}{\bra{\La}\mm{T}^n\ket{\Ra}}.
\label{eq:exactbound}
\end{equation}
This expression can be evaluated explicitly \cite{Affleck} in terms of Jordan decomposition of $\mm{T}$, yielding a {\em fractal} (nowhere continuous) dependence on $\Delta$:
\begin{equation}
D_Z = \frac{1}{2}(1-\Delta^2) \frac{m}{m-1}, \qquad \Delta = \cos\frac{\pi l}{m}.
\end{equation}

Note that two marginal cases with $|\Delta|=1$, (for $l=0$ and $l=m$), are exceptional, as translationally invariant operator
\begin{equation}
Q(\Delta=\pm 1)=\ii\sum_{d=2}^{\infty}\Delta^{d-2}\sum_{x\in \mathbb{Z}}\eta_x(\sigma_1^+ \sigma_d^- - \sigma_1^- \sigma_d^+)
\end{equation}
is no longer a spatial sum of exponentially localized operators (therefore incompatible with the Definition~\ref{def:1}), i.e. $\| q^{(d)}\|$ ceases to decay with increasing $d$.

Computing the Drude weight bounds for a finite temperature ($\beta > 0$) is certainly more tedious and cannot be done as explicitly as for the infinite temperature (\ref{eq:exactbound}). One possible approach for high-temperature is, for example, a $\beta$-expansion \cite{Enej}.

We stress that the bounds provided by our Theorems 1 and 2 for the $XXZ$ model are mathematically rigorous if one assumes in addition the existence of the
Drude weight. The second assumption in the Theorem 1 (ii), essentially requiring an (arbitrarily slow) relaxation of the extended current-current temporal correlations 
$c(t)-\bar{c}$, can be understood as necessary for having a uniquely defined Drude weight (see the discussion in subsect.~\ref{sect:lr}).

\section{Discussion}

\label{sect:discussion}

\subsection{On the linear response derivation of the Drude weight}
\label{sect:lr}

Note that a slightly different form of the Drude weight than (\ref{eq:Dbeta}) follows from a strict derivation of the linear response where the canonical Kubo-Mori inner product 
$\int_0^\beta \dd \lambda \omega_\beta(a^* \tau_{\ii \lambda}(b))$ replaces a simple thermal average $\omega_\beta(a^*b)$.  We will show that a small extra assumption of  ``non-ergodic dynamical mixing'' is needed in order to justify the simple thermal-averaged expression (\ref{eq:Dbeta}). Let us here carefully outline our linear response setup and the main steps of our argument.

We consider the constant-gradient field perturbation to the Hamiltonian, which extends on a finite symmetric sublattice $[-n,n]$ (for $\Lambda  \supseteq [-n,n]$), as
\begin{equation}
H^{F,n}_{\Lambda} = H_\Lambda - F \sum_{x=-n}^{n-d_q+1} \! x\, q_x,
\quad {\rm for\; some}\quad q \in \AAA_{[0,d_q-1]}.
\label{eq:quenched}
\end{equation}
$H^{F,n}_{\Lambda\to\ZZ}$ generates  a perturbed time evolution $\tau^{F,n}_t(a) = \lim_{\Lambda\to\ZZ} e^{\ii H^{F,n}_\Lambda} a  e^{-\ii H^{F,n}_\Lambda}$, on an infinite lattice, but still for a finite {\em field extension} $[-n,n]$. 
Starting in the equilibrium state $\omega_\beta$ for $F=0$, at $t=0$, and then {\em quenching} the Hamiltonian by switching on the force field (\ref{eq:quenched}), we define the {\em canonical} Drude weight as the {\em asymptotic rate} at which the local current in the bulk increases per unit time, after taking the limit of infinite field extension:
\begin{equation}
\tilde{D}_\beta:= \lim_{t\to \infty}\frac{1}{2t}\lim_{n\to \infty}\left[\frac{\dd}{\dd F}\omega_{\beta}\left(\tau^{F,n}_{t}(j)\right)\right]_{F=0}.
\end{equation}
Note that, in contrast to Ref.~\cite{Jaksic}, we have to take a vanishing force limit $F\to 0$ first, in order to make our perturbation bounded (and well defined) in the infinite extension limit $n\to\infty$.
Writing in the first order of Born/Dyson expansion
\begin{equation}
(\tau_{-t}\circ \tau^{F,n}_t)(j) = j - \ii F \int_0^t \dd s \sum_{x=-n}^{n-d_q+1}\!x [\tau_{-s}(q_x), j] + {\cal O}(F^2),
\end{equation}
and observing the following identity, for any $f,g\in\AAA$
\begin{equation}
\omega_\beta([\tau_t(f),g]) = \ii \int_0^\beta\!\dd \lambda\,\omega_\beta(\tau_{t-\ii \lambda}(\delta(f)) g)
\end{equation}
where $\delta(f)\equiv \lim_{\Lambda\to\ZZ} \ii [H_\Lambda,f]$ is  the $*$-derivation with respect to the
the unperturbed dynamics,
we arrive at
\begin{equation}
\tilde{D}_\beta = \lim_{t\to\infty}\frac{1}{2t}\lim_{n\to\infty}\int_{0}^{t}\dd s\int_{0}^{\beta}\dd\lambda \sum_{x=-n}^{n-d_q+1} x\, \omega_{\beta}
\left(\tau_{-s-\ii\lambda}\left(\delta(q_x)\right)j\right).
\end{equation}
Finally, we choose the current $j$ and the force-field density $q$, such that they satisfy the {\em continuity equation} $j_{x-1}-j_x=\delta(q_x)$ (see, as an example, Eq.~(\ref{eq:rec1})). The sum over $x$ can now be expressed as a difference of two sums 
which result in, after shifting the index $x\to x + 1$ in the first sum containing $j_{x-1}$,
\begin{equation}
\tilde{D}_\beta = \lim_{t\to\infty}\frac{1}{2t}\lim_{n\to\infty}\int_{0}^{t}\dd s\int_{0}^{\beta}\dd\lambda \sum_{x=-n}^{n-d_j+1} \omega_{\beta}
\left(\tau_{-s-\ii\lambda}\left(j_x\right)j\right),
\label{eq:Dtilde1}
\end{equation}
except for the boundary terms  of magnitude $|\omega_{\beta}(\tau_{-s-i\lambda}(j_x)j)|$ at $x\approx \pm n$ which decay exponentially in $n$ and uniformly in $z = -s-\ii \lambda$ (and thus do not contribute in the limit $n\to\infty$ of (\ref{eq:Dtilde1}))
according to the Theorem 4.2 of Araki \cite{Araki}: Namely, 
$\lim_{n\to \infty} e^{|n|\rho}\| [f,\tau_z(\eta_n (g))]\|=0$, applying for any strictly local $f,g$ and $z\in\CC$.
Shifting the time evolution to the second factor of RHS (\ref{eq:Dtilde1}), 
the sequence $\tilde{c}_n(z):=\sum_{x=-n}^{n-d_j+1}\omega_{\beta}\left(j_x \tau_{z}(j)\right)$ converges uniformly to $c(z)$ (expression (\ref{eq:cdef1}) of Theorem 1, but for complex time argument $z$), again as a consequence of the above
mentioned Araki's proof of locality of complex time evolution. Thus, the limit $n\to\infty$ and $s,\lambda$ integrations can be interchanged. Finally, we obtain
\begin{equation}
\tilde{D}_\beta = 
\lim_{t\to\infty}\frac{1}{2t} \int_{0}^{t}\dd s\int_{0}^{\beta}\dd\lambda\, c(s + \ii \lambda).
\label{eq:Dtilde2}
\end{equation}
Clearly, a sufficient condition for the equality between the {\em thermal} (\ref{eq:Dc}) and the {\em canonical} (\ref{eq:Dtilde2}) 
Drude weights
\begin{equation}
D_\beta = \tilde{D}_\beta
\end{equation}
is a kind of {\em non-ergodic weak-mixing} condition
\begin{equation}
\lim_{t\to \infty} \frac{1}{2t} \int_{0}^t \dd s |c(s + \ii \lambda) - \bar{c}| = 0, \quad \forall \lambda \in [0,\beta].
\label{eq:weakmixing}
\end{equation}
This follows from writing $|D_\beta(t) - \tilde{D}_\beta(t)| \le \frac{1}{2t}\int_{0}^t \dd s \int_0^\beta \dd\lambda |c(s + \ii\lambda) - \bar{c}|$ where $D_\beta(t)$ and $\tilde{D}_\beta(t)$ denote the expressions (\ref{eq:Dbeta}) and (\ref{eq:Dtilde2}), respectively, before taking the limit $t\to\infty$.
For example, for the condition (\ref{eq:weakmixing}) to be fulfilled it is enough that the correlation function $c(z)$ converges 
(to an asymptotic value $\bar{c}$, no matter how slowly), as $|z|\to\infty$ anywhere within the thermal strip ${\rm Im\,}z\in [0,\beta]$.
This is likely to be expected as a generic property of many-body systems with thermodynamically dense spectra, however it seems one may not trivialize it using Cauchy's integral theorem, as in a related study of Ref.~\cite{Jaksic} with holomorphic dynamics of finite-system's observables. For instance, our $c(z)$ may not be holomorphic (due to an infinite extension limit $n\to \infty$ needed for a definition (\ref{eq:cdef1})). In fact, a better understanding of analytic properties of $c(z)$ seems to be an interesting problem for future investigations.

Nevertheless, even if $D_\beta$ (\ref{eq:Dbeta}) may not be both, rigorously and generally connected to the linear response theory of the Drude weight, it might still be interesting general-purpose quantity in its own right, as a simple dynamical indicator of non-ergodic $C^*$ dynamical systems possessing non-trivial conservation laws.

\subsection{General remarks}

Note that our Theorems~\ref{theo:main},\ref{theo:general} can be applied to any completely integrable quantum chain, like the one-dimensional fermionic Hubbard model, supersymmetric $t$-$J$ model, etc., for which one has algebraic procedures (e.g. algebraic Bethe Ansatz) for obtaining nontrivial conserved quantities which can be written as exponentially convergent sums of local operators.
It is enough that these operators are conserved only in the TL, while for finite chains their time-derivative is supported only on finite domains near the boundaries of the chain. 

We conclude by remarking that the conditions of our theorems could be relaxed or generalized in some cases: (i) The translational invariance of interactions was required only to guarantee a sufficiently strong clustering in the equilibrium state. This can perhaps be relaxed, and replaced by an algebraic clustering in some instances (with inverse power larger than $1$, e.g. for zero temperature equilibrium states in the so-called {\em critical} systems), or exponential clustering may set in for other reasons (e.g. due to {\em disordered} interactions and {\em localization}, however in such cases we are not aware of non-trivial examples of extensive conserved quantities). (ii) Quasi-local almost-conserved quantities need not be exponentially localized, but it would be enough to have a power law bound $\| q^{(d)}\| \le \gamma d^{-\nu}$ for sufficiently large inverse power $\nu > 3$. (iii) Similar theory could be built for  higher, $D$-dimensional quantum spin lattices, where time-derivatives (commutators) of almost conserved quantities could result in terms distributed on the $(D-1)$-dimensional boundary of the lattice. Again, we do not see yet any application of this potentially very interesting generalization.
\\\\
The work has been partially supported by research grants P1-0044 and J1-2208 of Slovenian Research Agency (ARRS).

\end{document}